 \newtheorem{theorem}{Theorem}[section]
\newtheorem{lemma}[theorem]{Lemma}
\newtheorem{definition}{Definition}
\def\1{\mathbf{1}}
\newcommand{\subscr}[2]{{#1}_{\textup{#2}}}
\newcommand{\abs}[1]{|{#1}|}
\DeclareMathOperator*{\argmax}{arg\,max}
\title{Competitive Perimeter Defense of Conical Environments}
\author{Shivam Bajaj$^{1}$, Eric Torng$^{2}$, Shaunak D. Bopardikar$^{1}$, \\ Alexander Von Moll$^{3}$, Isaac Weintraub$^{3}$, Eloy Garcia$^{3}$,  David W. Casbeer$^{3}$
% <-this % stops a space
\thanks{$^{1}$S. Bajaj and S. D. Bopardikar are with the Department of Electrical and Computer Engineering, Michigan State University.
Email:\texttt{ bajajshi@msu.edu} (Shivam Bajaj)}
\thanks{$^{2}$E. Torng is with the Department of Computer Science and Engineering, Michigan State University.}
\thanks{$^{3}$A. Von Moll, I. Weintraub, E. Garcia and D. Casbeer are with Control Science
Center, Air Force Research Laboratory.}
\thanks{This research was supported in part by the Air Force Office of Scientific Research Summer Faculty Fellowship Program, Contract Numbers FA8750-15-3-6003, FA9550-15-0001 and FA9550-20-F-0005 and in part by NSF Award ECCS-2030556. Approved for public release: distribution unlimited, case number: AFRL-2021-3011.}
}
\begin{document}
\maketitle
\thispagestyle{empty}
\pagestyle{empty}

%%%%%%%%%%%%%%%%%%%%%%%%%%%%%%%55
%%%%%%%%%%%%%%%%%%%%%%%%%%%%%%%
\begin{abstract}
We consider a perimeter defense problem in a planar conical environment in which a single vehicle, having a finite capture radius, aims to defend a concentric perimeter from mobile intruders. The intruders are arbitrarily released at the circumference of the environment and they move radially toward the perimeter with fixed speed. We present a competitive analysis approach to this problem by measuring the performance of multiple online algorithms for the vehicle against \emph{arbitrary} inputs, relative to an optimal offline algorithm that has information about entire input sequence in advance. 
In particular, we establish two necessary conditions on the parameter space to guarantee (i) finite competitiveness of any algorithm and (ii) a competitive ratio of at least $2$ for any algorithm.
% In particular, we first establish a necessary condition on the parameter space to guarantee finite competitiveness of any algorithm. Then, we characterize a parameter regime in which the competitive ratio is guaranteed to be at least 2 for any algorithm.
We then design and analyze three online algorithms and characterize parameter regimes in which they have finite competitive ratios. Specifically, our first two algorithms are provably $1$, and $2$-competitive, respectively, whereas our third algorithm exhibits different competitive ratios in different regimes of problem parameters. Finally, we provide a numerical plot in the parameter space to reveal additional insights into the relative performance of our algorithms. 
\end{abstract}
\section{Introduction}
This work considers a perimeter defense problem in a conical environment in which a single mobile vehicle seeks to intercept mobile intruders before they enter a specified region (referred to as the perimeter). This scenario arises when a UAV is required to tag (or relay critical information to) the intruders (targets) before they reach a specific region of interest. The intruders are generated at the boundary of the environment and move radially inwards with fixed speed toward the perimeter. The vehicle, which has a finite capture radius, moves with bounded speed (greater than that of the intruders) with the aim of \textit{capturing} as many intruders as possible before they reach the perimeter. This is an online problem as the number and the arrival location of intruders is gradually revealed over time.

Most prior works in the area of perimeter defense have either focused on determining optimal strategies for a small number of agents or considered a stochastic arrival process for the intruders~\cite{von2020multiple,macharet2020adaptive,ShivamDVR2019}. Although these studies provide valuable insights, they essentially ignore the \emph{worst-case performance} where the intruders might coordinate their actions to overwhelm the defense \cite{von2021turret}.

In this work, we adopt a competitive analysis technique~\cite{sleator1985amortized} to assess online vehicle motion planning algorithms. In competitive analysis, we measure the performance of an online algorithm, $A$,
% that plans the vehicle's movements
using the concept of \emph{competitive ratio}, i.e., the ratio of an optimal offline algorithm's performance divided by algorithm $A$'s performance for a worst-case input instance. Algorithm $A$ is $c$-competitive if its competitive ratio is no larger than $c$ which means its performance is guaranteed to be within a factor, $c$, of the optimal for all input instances. In this work, the performance of an algorithm, either online or offline, is measured by the fraction of intruders captured.

% The primary application for our work is defending a perimeter from intruders such as missiles or locusts. %Additional applications include gathering information on mobile entities in surveillance or traffic scenarios.\\

A related area of research is vehicle routing with new inputs arriving over time. Introduced on graphs in~\cite{psaraftis1988dynamic}, a typical approach requires that the vehicle routes be re-planned as new information is revealed over time.
% The inputs may have multiple levels of priorities \cite{smith2010dynamic} or can be randomly recalled~\cite{bopardikar2020}. 
We refer the reader to \cite{bullo2011dynamic} and the references therein for a review of this literature. 
% A common way to analyze the performance of online algorithms is competitive analysis \cite{angelelli2007competitive, blom2001online,  henn2012algorithms}. 
In most of the vehicle routing problems, the input (known as \emph{demands}) are static, and so, the problem is to find the shortest route through the demands in order to minimize (maximize) the cost (reward); examples of such metrics would be the total service time or the number of inputs \emph{serviced}. In perimeter defense scenarios, the input (intruders) are not static. Instead, they are moving towards a specified region, making this problem more challenging than the former. In our previous works, we introduced perimeter defense problems in circular and rectangular environments with stochastically generated input, \cite{ShivamDVR2019,Smith2009translating}. The key distinction of our current work from these past works is the characterization of \emph{competitiveness} for the worst-case inputs, as opposed to the average-case.

Perimeter defense problems were first introduced for a single vehicle and a single intruder in~\cite{isaacs1965differential}. Since then, perimeter defense has been mostly formulated as a pursuit-evasion differential game. The multiplayer setting for the same has been studied extensively as a reach-avoid game in which the aim is to design control policies for the intruders and the defenders \cite{chen2016multiplayer,garcia2019strategies,davydov2020pursuer}. 
A typical approach requires computing solutions to the Hamilton-Jacobi-Bellman-Isaacs equation, which is generally only suitable for low dimensional state spaces and in simple environments \cite{margellos2011hamilton,chen2014path}. Recent works include \cite{yan2019matchingbased, shishika2019perimeter,velhal2021decentralized,lee2021guarding}. Authors in \cite{yan2019matchingbased} propose a receding horizon strategy based on maximum matching,  \cite{ shishika2019perimeter,velhal2021decentralized} consider a scenario wherein the defenders are constrained to be on the perimeter and \cite{lee2021guarding} extends the reach avoid game to $n$-dimensional Euclidean spaces.
Previously, we introduced a perimeter defense problem for linear environments based on the use of competitive analysis \cite{bajaj2021competitive}. The key distinction of our current work from the past work is the geometry of the environment which yields novel results in terms of optimally placing the vehicle, role of capture radius and additional conditions to guarantee competitiveness of the algorithms.

The general contribution of this paper is that we consider a conical environment of unit radius and angle $2\theta$ in which arbitrary number of intruders are released at the circumference of the environment at arbitrary time instances. Upon release, the intruders move radially inwards with fixed speed $v<1$ with the aim of reaching a conical perimeter of radius $\rho<1$ and angle $2\theta$. A single vehicle having a finite capture radius $r$, moves with maximum speed of unity with an aim to capture the intruders. Our main contributions are as follows. We first establish two necessary conditions in the parameter space for achieving a $c$-competitive algorithm with a finite $c$. Specifically, we characterize the parameter regime in which no online algorithm is $c$-competitive and a parameter regime in which no algorithm can be better than $2$-competitive. Next, we design and analyze three classes of algorithms and establish their competitiveness. Specifically, we identify parameter regimes in which the first two algorithms are provably $1$ and $2$-competitive, respectively, and the third algorithm has a finite competitive ratio that varies with the problem parameters ($r,\rho,\theta$).
% Due to differences in the geometry, 
% the algorithms and proofs of competitiveness in this work are substantially different from those of  \cite{bajaj2021competitive}. %, in terms of optimally placing the vehicle, role of capture radius and additional conditions to guarantee competitiveness of the algorithms. 

This paper is organized as follows. In section \ref{sec:Problem}, we formally describe our problem and define the competitive ratio for online algorithms. Section \ref{sec:Fundamental_limits} establishes two necessary conditions; first on achieving a finite competitive ratio and second on achieving at best a competitive ratio of $2$. In section \ref{sec:Algorithms}, we design and analyze three algorithms and establish their competitive ratios, section \ref{sec:Results} provides additional insights through numerous parameter space plots and finally, section \ref{sec:Conclusion} summarizes this work and outlines directions for future works.

\begin{figure}[t]
    \centering
    \includegraphics[scale=0.5]{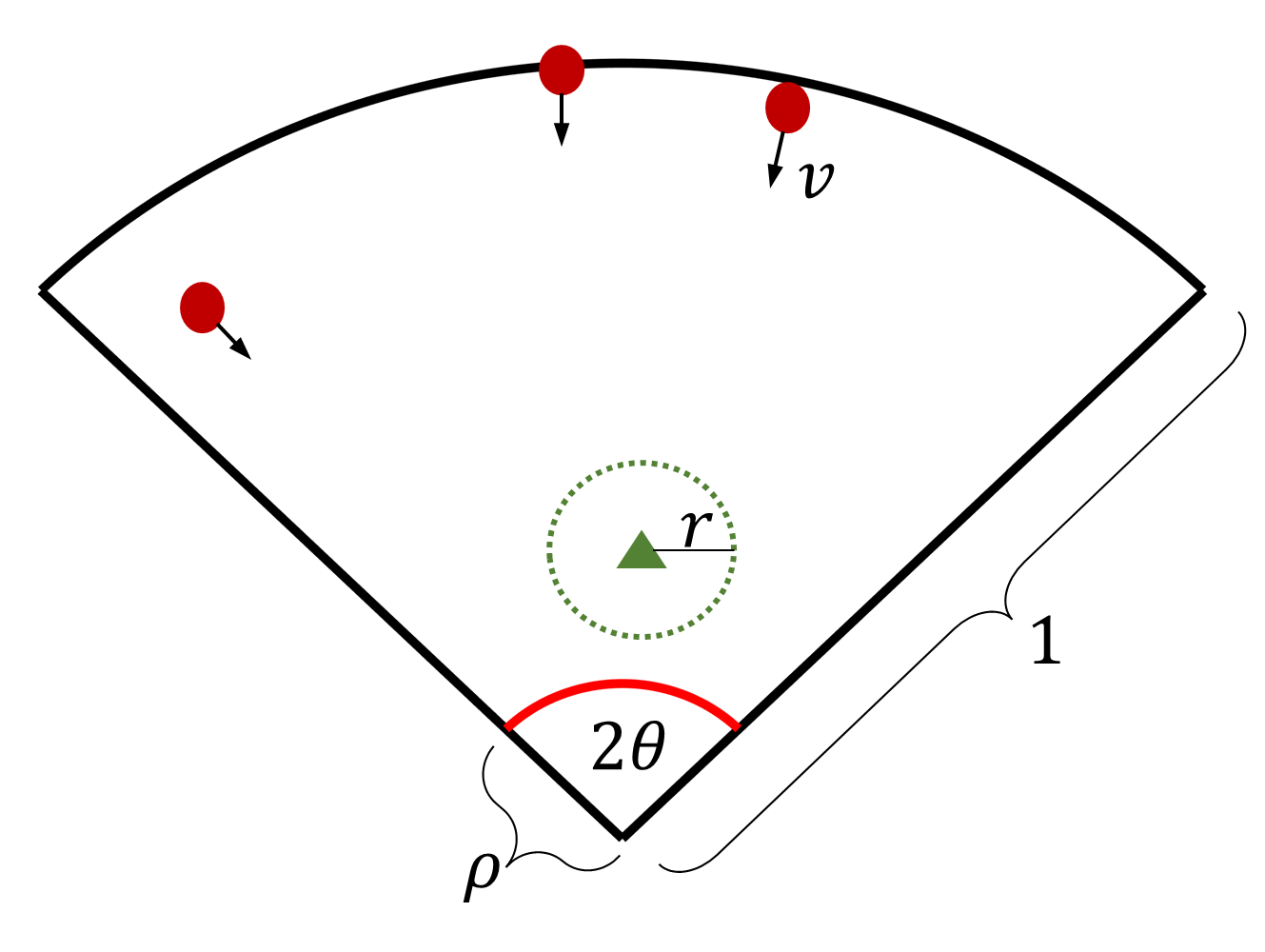}
    \caption{\small Problem Description. The dark red circles represent the intruders, and the vehicle and its capture circle are represented by the green triangle and green circle, respectively. The perimeter is denoted by the red curve.}
    \label{fig:Problem_Desc}
\end{figure}

\section{Problem Description}\label{sec:Problem}
Consider a conical environment of $\mathcal{E}(\theta)=\{(y,\alpha) \, : \, 0 < y \leq 1, -\theta\leq\alpha\leq \theta \}$ which contains a conical region (referred to as perimeter) $\mathcal{R}(\rho,\theta)=\{(z,\alpha) \, : \, 0 < z \leq \rho<1, -\theta\leq\alpha\leq \theta \}$ (Fig. \ref{fig:Problem_Desc}). Intruders arrive over time at the circumference of the environment, i.e., $y=1$ and move radially inwards with a fixed speed $v$ towards the origin in order to breach the perimeter. The defense consists of a single vehicle with motion modeled as a first order integrator with maximum speed of unity and a finite capture radius $r<\rho$ \footnote{If $r\geq \rho$, then the problem is trivial as an algorithm that positions the vehicle at the origin can capture all intruders.}. A \emph{capture circle} is defined as a circle of radius $r$, centered at the vehicle's location. An intruder is \emph{captured} and subsequently removed from $\mathcal{E}(\theta)$ if it lies within or on the capture circle.  
An intruder is said to be \emph{lost} if the intruder reaches the perimeter without being captured by the vehicle.

A \emph{problem instance} $\mathcal{P}(\theta,\rho,v,r)$ is characterized by four parameters: the speed of the intruders, $v<1$, the perimeter's radius $0<\rho<1$, the angle that defines the size of the environment as well as the perimeter, $0<\theta\leq \pi$ and, the capture radius $r<\rho$.
An input instance $\mathcal{I}$ is a set of tuples consisting of time instant $t\leq T$, where $T$ denotes the final time instant, the number of intruders $N(t)$ that are released at time instant $t$, and the arrival location of each of the $N(t)$ intruders. Formally,
$\mathcal{I}=\{t,N(t),\{ (1,\alpha_1),(1,\alpha_2),\dots,(1,\alpha_{N(t)}) \}\}_{t=0}^T$, for any $\alpha_l \in [-\theta,\theta]$ where $1\leq l \leq N(t)$. 

% We define the \emph{capture circle} as a circle of radius $r$, centered at the vehicle's location.

% A \emph{problem instance} $\mathcal{P}(\theta,\rho,v,r)$ is characterized by four parameters: the speed ratio or equivalently the speed of the intruders, $v<1$, the perimeter's radius $0<\rho<1$, the angle that defines the size of the environment as well as the perimeter, $0<\theta\leq \pi$ and, the capture radius $r<\rho$.
% The vehicle is said to \emph{capture} an intruder when the intruder location lies within or on the capture radius of the vehicle. An intruder is said to be \emph{lost} if the intruder reaches the perimeter without being captured by the vehicle. 

% An input instance $\mathcal{I}$ is a set of tuples consisting of time instant $t$, the number of intruders that arrive at time instant $t$, $N(t)$ and the arrival location of each of the $N(t)$ intruders. Formally,
% $\mathcal{I}=\{t,N(t),\{ (1,\alpha) \}^{N(t)}\}_{t=0}^T,\forall \alpha \in [-\theta,\theta]$, where $T$ denotes the final time instant. 
% $\mathcal{I} = {(t, N(t), {(1,\alpha)^{N(t)})}_{t=0}^T}$
We now formally define an online algorithm.

\textit{Online Algorithm:} An online algorithm is a map $\mathcal{A}: I(t)\rightarrow \mathcal{B}_1$, where $\mathcal{B}_1$ denotes a unit ball. In other words, $\mathcal{A}$ assigns a velocity in the plane with at most unit magnitude to the vehicle as a function of the input $I(t)\subset \mathcal{I}$ revealed until time $t$, yielding the kinematic model, $\mathbf{\dot{x}}(t)=\mathcal{A}(I(t))$, where $\mathbf{x}$ denotes the vehicle's polar coordinates.

An optimal \emph{offline algorithm} is a non-causal algorithm which computes the velocity of the vehicle at any time $t$ as a function of the entire input instance $\mathcal{I}$; that is, it knows in advance when, where, and how many intruders will arrive. 

\begin{definition}[Competitive Ratio]
Given a problem instance $\mathcal{P}(\theta,\rho,r,v)$, an input instance $\mathcal{I}$, and an online algorithm $A$, let $A(\mathcal{I})$ denote the the number of intruders captured by the vehicle when using algorithm A on input instance $\mathcal{I}$. Let $\mathcal{O}$ denote the optimal offline algorithm that maximizes the number of intruders captured out of input instance $\mathcal{I}$. Then, the competitive ratio of $A$ on $\mathcal{I}$ is defined as $c_A(\mathcal{I}) = \tfrac{\mathcal{O}(\mathcal{I})}{A(\mathcal{I})} \ge 1$, and the competitive ratio of $A$ for the problem instance $\mathcal{P}$ is $c_A(\mathcal{P}) = \sup_{\mathcal{I}} c_A(\mathcal{I})$. Finally, the competitive ratio for the problem instance $\mathcal{P}$ is $c(\mathcal{P}) = \inf_A c_A(\mathcal{P})$.
An algorithm is $c$-competitive for the problem instance $\mathcal{P}(\theta,\rho,r,v)$ if $c_A(\mathcal{P}) \leq c$, where $c\geq 1$ is a constant.
\end{definition}

\textit{Problem Statement:} The aim is to establish fundamental guarantees and to design $c$-competitive algorithms for the vehicle with minimum $c$.\\

In light of Lemma 1 in \cite{bajaj2021competitive}, we restrict our attention to extreme speed algorithms that move the vehicle with maximum speed or keep it stationary.

% In light of the following lemma, we restrict our attention to extreme speed algorithms that move the vehicle with maximum speed or keep it stationary. 

% \begin{lemma}[Extreme speed algorithms, \cite{bajaj2021competitive}]
% For any problem instance $\mathcal{P}$ and any non-adaptive input instances, extreme speed algorithms intercept the same number of intruders as general algorithms that move the vehicle with any speed in $[0,1]$.
% \end{lemma}

\section{Fundamental Limit}\label{sec:Fundamental_limits}
Before we establish necessary conditions in the space of problem parameters $(\theta,v,r,\rho)$
we provide two properties based on geometry of the environment.
\begin{lemma}\label{lem:capture_rad_lim}
For a problem instance $\mathcal{P}$ with $\theta < \frac{\pi}{4}$, all intruders can be captured if
$r \geq \rho\tan(\theta)$ by positioning the vehicle at $\left(\tfrac{\rho}{\cos(\theta)},0\right)$. 
\end{lemma}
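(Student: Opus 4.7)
The plan is to show that the stated stationary position places the capture disk of radius $r$ so that it entirely covers the perimeter arc $\{(\rho,\phi):\phi\in[-\theta,\theta]\}$. Once that is established, any intruder entering at $(1,\alpha)$ and moving radially toward the origin must eventually reach $(\rho,\alpha)$, which lies inside the capture disk, so by continuity the intruder's straight-line trajectory must enter the closed capture disk at some time no later than the moment it reaches the perimeter; at that instant the vehicle captures it. This reduces the lemma to a single geometric inequality about distances.

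The key step is therefore to bound $\sup_{\phi\in[-\theta,\theta]} d(\phi)$, where $d(\phi)$ is the Euclidean distance from the vehicle's position $(\rho/\cos\theta,0)$ (viewed in Cartesian coordinates as $(\rho/\cos\theta,0)$) to the perimeter point $(\rho\cos\phi,\rho\sin\phi)$. Expanding the square of the distance yields
\begin{equation*}
d(\phi)^2 \;=\; \rho^2\bigl(1+\sec^2\theta-2\sec\theta\cos\phi\bigr),
\end{equation*}
which is an increasing function of $|\phi|$ on $[-\theta,\theta]$ because $\sec\theta>0$. Thus the supremum is attained at $\phi=\pm\theta$, and a direct simplification using $\sec^2\theta-1=\tan^2\theta$ gives $d(\pm\theta)=\rho\tan\theta$. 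Hence $r\ge\rho\tan\theta$ implies $d(\phi)\le r$ for every $\phi\in[-\theta,\theta]$, i.e., the entire perimeter arc is contained in the closed capture disk.

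A short remark ties this back to the hypothesis $\theta<\pi/4$: the capture-radius constraint $r<\rho$ from the problem setup is compatible with the requirement $r\ge\rho\tan\theta$ exactly when $\tan\theta<1$, which is why the lemma restricts to $\theta<\pi/4$. It is also worth noting (and can be verified as a sanity check) that the segment from the vehicle to a perimeter corner $(\rho,\pm\theta)$ is perpendicular to the radial boundary ray $\alpha=\pm\theta$, so the capture circle is in fact tangent to both bounding rays at those corners, which explains geometrically why $\rho\tan\theta$ is the exact covering radius.

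I do not anticipate a significant obstacle; the argument is essentially one monotonicity observation together with an algebraic identity. The only delicate point is the edge case where an intruder is captured exactly as it touches the arc $y=\rho$, which I will handle by adopting the standard convention that simultaneous capture counts as capture (consistent with the definition of a ``lost'' intruder as one that reaches the perimeter \emph{without} being captured).
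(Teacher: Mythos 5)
Your proof is correct and takes essentially the same approach as the paper: both show that placing the vehicle at $\left(\tfrac{\rho}{\cos(\theta)},0\right)$ makes the capture circle of radius $\rho\tan(\theta)$ cover the outer arc of the perimeter, the paper deriving this synthetically from the tangency of the capture circle to the bounding rays and you verifying it by a direct distance computation with a monotonicity argument. Your version is in fact slightly more complete, since you also make explicit the step (left implicit in the paper) that covering the arc at radius $\rho$ suffices because every radially moving intruder must pass through $(\rho,\alpha)$ before it can be lost.
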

\begin{proof}
Let $(x,\alpha)\in\mathcal{E}(\theta)$ denote a location for the vehicle such that when the vehicle is positioned at $(x,\alpha)$, the capture circle contains the entire circumference of the perimeter.
Note that in order to ensure that the capture circle contains the entire perimeter,  location $(x,\alpha)$ must be equidistant from the points $(\rho,\theta)$ and $(\rho,-\theta)$, i.e., the position of the vehicle $(x,\alpha)$ must lie on the angle bisector that bisects angle $2\theta$ of the environment, implying that $\alpha=0$. Further, the minimum capture radius $r$ must be a value such that the capture circle is tangent to the sector of radius $\rho$ and angle $2\theta$.
Using the property that the radius of a circle is perpendicular to its tangent and applying trigonometric definitions, we obtain
    $r = \rho\tan(\theta)$ and 
    $x = \tfrac{\rho}{\cos(\theta)}$.
This concludes the proof.
\end{proof}

\begin{figure}[t]
    \centering
    \includegraphics[scale=0.5]{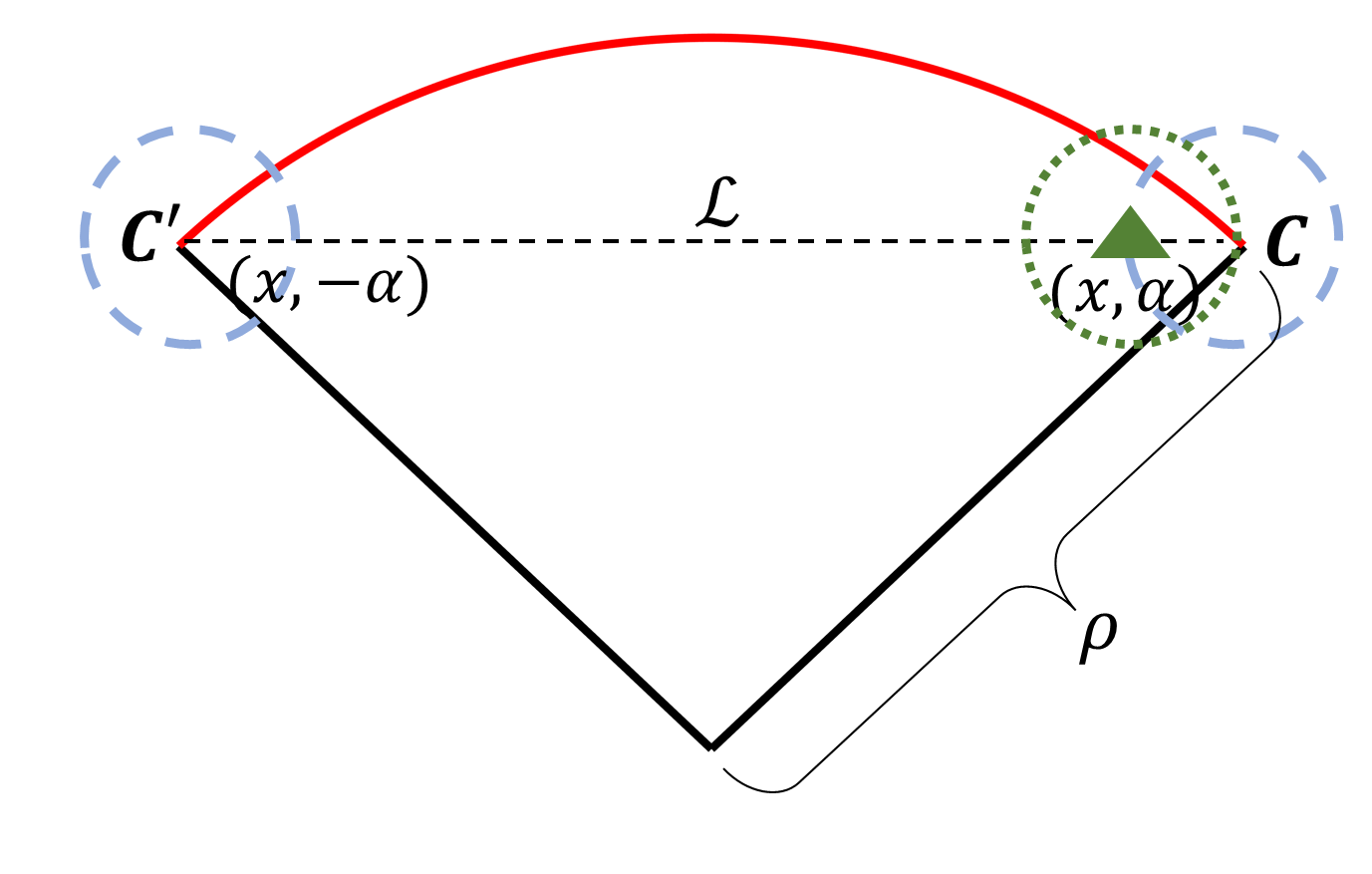}
    \caption{\small Description of proof of Lemma \ref{lem:min_time}. The blue dashed circles $\mathcal{C}$ and $\mathcal{C'}$ are centered at $(\rho,\theta)$ and $(\rho,-\theta)$ respectively. The vehicle, denoted by the green triangle, is located at $(x,\alpha)$. The line $\mathcal{L}$ is denoted by the black dashed line.}
    \label{fig:min_time}
\end{figure}
The next result characterizes the minimum time required by the vehicle to move from one end of the perimeter to the other. 
\begin{lemma}\label{lem:min_time}
The minimum time required by the vehicle to move from a location such that the capture circle contains one end of the perimeter, $(\rho,\theta)$, to a location such that the capture circle contains the opposite end of the perimeter, $(\rho,-\theta)$, is $2(\rho\sin(\theta)-r)$ if $\theta< \frac{\pi}{2}$ and $2(\rho-r)$ otherwise.
% \begin{align*}
%     2(\rho\sin(\theta)-r) \text{ if } \theta< \frac{\pi}{2}\\
%     2(\rho-r) \text{ if } 
% \end{align*}
\end{lemma}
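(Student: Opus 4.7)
I plan to switch to Cartesian coordinates and place the two perimeter corner points at $P_1=(\rho\cos\theta,\rho\sin\theta)$ and $P_2=(\rho\cos\theta,-\rho\sin\theta)$. Because the vehicle has unit maximum speed and must remain inside $\mathcal{E}(\theta)$, the minimum travel time equals the length of the shortest path in $\mathcal{E}(\theta)$ connecting the closed disk $\mathcal{C}=B(P_1,r)$ (positions whose capture circle contains $P_1$) to $\mathcal{C}'=B(P_2,r)$, so the lemma reduces to a planar shortest-path computation; the two cases correspond to whether or not the chord $P_1P_2$ lies inside the sector.

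For $\theta<\pi/2$, the angular span $2\theta<\pi$ and $\mathcal{E}(\theta)$ is convex. The closest pair between $\mathcal{C}$ and $\mathcal{C}'$ lies on the chord at $(\rho\cos\theta,\pm(\rho\sin\theta-r))$, and a direct angle check shows their polar-angle magnitude is strictly less than $\theta$, so the connecting segment is feasible and has length $\|P_1-P_2\|-2r=2(\rho\sin\theta-r)$. The boundary case $\theta=\pi/2$ reduces the chord to a diameter of the closed half-disk, still of length $2(\rho-r)$, consistent with both formulas.

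For $\theta>\pi/2$, the angular span exceeds $\pi$ so $\mathcal{E}(\theta)$ is non-convex with the origin as its unique reflex vertex. An angle calculation shows every interior point of the chord $P_1P_2$ has polar-angle magnitude strictly greater than $\theta$ (the midpoint is on the negative $x$-axis at angle $\pi$), so the chord is infeasible. By the standard reflex-vertex characterisation of shortest paths in a planar region with polygonal boundary, the shortest feasible path from any $Q_1\in\mathcal{C}$ to any $Q_2\in\mathcal{C}'$ is a two-segment polyline $Q_1\to 0\to Q_2$ of length $\|Q_1\|+\|Q_2\|$. Using $\|Q_i\|\ge\|P_i\|-r=\rho-r$, the infimum over $Q_1,Q_2$ is $2(\rho-r)$, attained in the limit by starting and ending at $((\rho-r)\cos\theta,\pm(\rho-r)\sin\theta)$ and traversing the two boundary rays through the origin.

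The main step I expect to be the most delicate is the lower bound $\|Q_1\|+\|Q_2\|$ on any admissible path in the non-convex case, since I would prefer not to quote the reflex-vertex theorem as a black box. My plan is a direct argument using the intermediate value theorem: in the non-overlap regime $r<\rho\sin\theta$, any admissible path must cross the $x$-axis at a point inside $\mathcal{E}(\theta)$, and since negative-$x$-axis points have polar angle $\pi\notin[-\theta,\theta]$, this crossing must occur on the non-negative $x$-axis at some $(r_0,0)$. Splitting the path at that crossing and applying the triangle inequality to each half, together with the fact that $\mathcal{C}$ is strictly contained in the upper half-plane and that for $\theta>\pi/2$ the $x$-coordinate of every point of $\mathcal{C}\cap\mathcal{E}(\theta)$ has sign opposite to $r_0$ whenever $r\le\rho|\cos\theta|$ (the only nontrivial regime, since the complementary regime allows a direct feasible vertical chord whose length is easily shown to exceed $2(\rho-r)$ by a single algebraic manipulation), then forces $\|Q_i-(r_0,0)\|\ge\|Q_i\|\ge\rho-r$, delivering the claimed bound.
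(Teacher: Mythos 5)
Your overall route is the same as the paper's: reduce the question to a shortest-path computation between the two radius-$r$ disks centered at the perimeter endpoints $P_1,P_2$, split on whether the chord joining the centers stays inside the sector, and obtain $\|P_1-P_2\|-2r=2(\rho\sin\theta-r)$ in the convex case and a through-the-origin path of length $2(\rho-r)$ in the reflex case. Your case (i), including the feasibility check on the polar angles of the closest pair, is fine, and in case (ii) you are actually more careful than the paper, which simply asserts that the shortest path must pass through the origin.

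However, your lower-bound argument for case (ii) has a genuine gap in the sub-regime $\rho|\cos\theta|<r<\rho\sin\theta$ (non-empty whenever $\theta$ is close to $\pi/2$). There, points of $\mathcal{C}\cap\mathcal{E}(\theta)$ can have positive $x$-coordinate, so the inequality $\|Q_1-(r_0,0)\|\ge\|Q_1\|$ fails and your sign-based triangle inequality does not apply. Your proposed fallback for that regime --- exhibiting a feasible vertical chord whose length exceeds $2(\rho-r)$ --- proves nothing: producing one long admissible path is irrelevant to a lower bound on the minimum (and the upper bound is already supplied by the through-origin path), so the claimed bound is simply not established there. The repair is to center the reverse triangle inequality at $P_i$ rather than at $Q_i$: for any crossing point $(r_0,0)$ with $r_0\ge 0$ one has $\|Q_i-(r_0,0)\|\ge\|P_i-(r_0,0)\|-r$, and $\|P_i-(r_0,0)\|^2=\rho^2-2r_0\rho\cos\theta+r_0^2\ge\rho^2$ since $\cos\theta\le 0$ for $\theta\ge\pi/2$; this gives the bound $2(\rho-r)$ uniformly and removes the need for the sub-regime split altogether. (Two minor issues you share with the paper: the infimum in case (ii) is not attained because the origin is excluded from $\mathcal{E}(\theta)$, and for $\theta=\pi$ the two endpoints coincide so the statement degenerates.)
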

\begin{proof}
Consider two circles $\mathcal{C}$ and $\mathcal{C}'$, each of radius $r$ and with centers coinciding with the points $(\rho,\theta)$ and $(\rho,-\theta)$, respectively (Fig. \ref{fig:min_time}). 
Observe that, if the vehicle is located at any point within the intersection of the circle $\mathcal{C}$ (resp. $\mathcal{C}'$) and the perimeter, then the location $(\rho,\theta)$ (resp. $(\rho,-\theta)$) will be contained within the capture circle. % of the vehicle.

Let $(x,\alpha)\in \mathcal{E}(\theta)$ (resp. $(x,-\alpha)$) denote a point on the circumference of circle $\mathcal{C}$ (resp. $\mathcal{C}'$). Our aim is to determine the closest possible pair of locations $(x,\alpha)$ and $(x,-\alpha)$; that is, the pair of positions that minimizes the distance the vehicle needs to travel to go from one point to the other.
%for the vehicle such that the distance between these two points is the least. 
We consider two cases: (i) $\theta\le\tfrac{\pi}{2}$ and (ii) $\theta>\tfrac{\pi}{2}$. 

%and $\mathcal{L}$ denote the line that passes through 
%We will now determine the locations $(x,\alpha)$, $(x,-\alpha)$ and the distance between them.
For case (i), finding the closest pair of points corresponds to determining the shortest distance along the line $\mathcal{L}$ between the two circles $\mathcal{C}$ and $\mathcal{C}'$. The shortest line that connects any two non intersecting circles must pass through the center of the circles, so $\mathcal{L}$ must pass through $(\rho,\theta)$ and $(\rho,-\theta)$, and
the points $(x,\alpha)$ and $(x,-\alpha)$ are where $\mathcal{L}$ intersects $\mathcal{C}$ and $\mathcal{C}'$, respectively.
We compute the distance between the two points as follows. We first find the angle bisector which bisects the angle $2\theta$ of the environment. Note that the angle bisector is also the perpendicular bisector of line $\mathcal{L}$ as the triangle formed by joining points $(\rho,\theta)$, $(\rho,-\theta)$ and the origin is an isosceles triangle. By constructing a triangle that joins the points $(x,\alpha)$, origin, and the midpoint of line $\mathcal{L}$ and using  trigonometric identities, we determine that $x=\frac{\rho\sin(\theta)-r}{\sin(\alpha)}$ and $\alpha = \tan^{-1}(\frac{\rho\sin(\theta)-r}{\rho\cos(\theta)})$. From geometry, it follows that the length of the line segment joining the points $(x,\alpha)$ and $(x,-\alpha)$ is $2(\rho\sin(\theta)-r)$.
Note that when $\theta = \tfrac{\pi}{2}$, $\mathcal{L}$ runs through the origin simplifying this expression to $2(\rho-r)$ matching the expression from case (ii).

For case (ii), the corresponding line $\mathcal{L}$ is not contained in the environment which means the vehicle cannot travel on $\mathcal{L}$ to move from $(x,\alpha)$ to $(x,-\alpha)$. Instead, the shortest path is to  first move to the origin from $(x,\alpha)$ and then to the location $(x,-\alpha)$ from the origin. This gives us $x=\rho-r$ and $\alpha=\theta$ and a minimum distance of $2(\rho -r)$.
%For $\theta=\frac{\pi}{2}$, the environment and the perimeter formed is a semicircle and thus, the line $\mathcal{L}$ is along the diameter of the perimeter which yields $x=\rho-r$ and $\alpha=\theta$. Hence, the minimum distance or the time taken by the vehicle is $2(\rho-r)$. 
This concludes the proof.
\end{proof}

% \begin{theorem}[Finite competitiveness necessary condition]\label{thm:no_c}
We now present our first necessary condition on the problem parameters for a finite $c(\mathcal{P})$.
\begin{theorem}\label{thm:no_c}
\textbf{(Necessary condition for finite $c(\mathcal{P})$)}
For any problem instance $\mathcal{P}(v,\rho,r,\theta)$ with parameters satisfying 
\begin{align*} 
    2(\rho\sin(\theta)-r)>\frac{1-\rho}{v}, &\text{ if } \theta < \frac{\pi}{2}, \\
    2(\rho-r)>\frac{1-\rho}{v}, &\text{ if } \theta \geq \frac{\pi}{2},
\end{align*}
there does not exist a $c$-competitive algorithm for any constant $c$ and no algorithm, either online or offline, can capture all intruders.
\end{theorem}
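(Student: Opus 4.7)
The plan is to exhibit a small family of adversarial inputs that simultaneously implies (i) no algorithm can capture all intruders and (ii) no algorithm can be $c$-competitive for any constant $c$. The base building block is the input $I_0$ that releases two intruders at time $0$, at $(1,\theta)$ and $(1,-\theta)$. Any vehicle position capable of capturing the first lies within distance $r$ of the radial segment $\{(y,\theta):y\in[\rho,1]\}$, and similarly for the opposite segment. By the same geometric computation that underlies the proof of Lemma~\ref{lem:min_time}, the minimum distance between these two ``capture neighborhoods'' is $2(\rho\sin\theta-r)$ for $\theta<\pi/2$ and $2(\rho-r)$ for $\theta\geq\pi/2$; the hypothesis asserts each exceeds the intruder flight time $(1-\rho)/v$, so the unit-speed vehicle cannot capture both intruders of $I_0$, proving (i).

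For (ii), fix any deterministic algorithm $A$ and any positive integer $N$. First run $A$ on $I_0$. If $A$ captures neither intruder, then $A(I_0)=0$ while $OPT(I_0)\geq 1$ (the offline vehicle pre-positions at $(\rho,\theta)$), so the ratio is already unbounded. Otherwise $A$ captures exactly one of the two (by the argument just given); by left-right symmetry we may assume it is the $+\theta$ intruder, captured at some time $\tau\in[0,(1-\rho)/v]$. Define $I_N$ by appending $N$ additional intruders released at $(1,-\theta)$ at time $\tau$. Since nothing new is revealed to $A$ strictly before $\tau$, the trajectory of $A$ on $I_N$ agrees with its trajectory on $I_0$ up to time $\tau$; in particular $A$ still captures the $+\theta$ intruder at time $\tau$, and at that instant lies within $r$ of some point of the $+\theta$ radial segment.

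The decisive step is the same traversal bound: from such a position, the distance $A$ must cover to reach any capture point of the $-\theta$ trajectory is at least $2(\rho\sin\theta-r)$ (resp.\ $2(\rho-r)$), which by hypothesis exceeds the batch's remaining lifetime $(1-\rho)/v$. Therefore $A$ captures neither the original $-\theta$ intruder nor any of the $N$ batch intruders, and $A(I_N)=1$. By contrast, $OPT$---with full foreknowledge of $I_N$---stations the vehicle at $(\rho,-\theta)$ from time $0$, absorbing the initial $-\theta$ intruder as it enters the capture disk and sweeping up all $N$ batch intruders upon their arrival, so $OPT(I_N)\geq N+1$. Hence $OPT(I_N)/A(I_N)\geq N+1$, and letting $N\to\infty$ rules out any finite competitive ratio.

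The main technical step beyond Lemma~\ref{lem:min_time} is confirming that the minimum distance between the two radial capture neighborhoods is the same as the minimum distance between the perimeter endpoints $(\rho,\pm\theta)$ treated in that lemma. For $\theta<\pi/2$ this reduces to minimizing $y_1^2+y_2^2-2y_1 y_2\cos 2\theta$ over $[\rho,1]^2$, whose minimizer is $(\rho,\rho)$ by a straightforward partial-derivative check; for $\theta\geq\pi/2$ the environment constraint forces any feasible path between the two sides through (a neighborhood of) the origin, matching the second case of Lemma~\ref{lem:min_time}'s proof.
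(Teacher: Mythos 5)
Your proof is correct, and it rests on the same core mechanism as the paper's: an adaptive adversary that waits for the vehicle to commit to one side of the cone and then floods the other side, combined with the traversal lower bound of Lemma~\ref{lem:min_time}. The constructions differ in detail. The paper opens with an infinite stream at $(1,\theta)$ spaced $\frac{1-\rho}{v}$ apart (so that an algorithm capturing nothing is trivially non-competitive against an offline player parked at $(\rho,\theta)$), lets the online vehicle capture its first stream intruder at some time $t$, and then releases a burst of $c+1$ intruders at $(1,-\theta)$; the offline player sacrifices only that one stream intruder, giving a ratio exceeding $c$. You instead use a two-intruder instance $I_0$ to establish that no algorithm (online or offline) captures everything, and then append an adaptive batch of $N$ intruders at the capture time $\tau$, letting $N\to\infty$; this yields the non-existence of any finite competitive ratio without tying the burst size to a target constant $c$. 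Both routes are valid. One point where your write-up is actually more careful than the paper's: the paper invokes Lemma~\ref{lem:min_time} even though at time $t$ the vehicle is within $r$ of a stream intruder at an arbitrary radius $y_1\in[\rho,1]$, not at the perimeter endpoint $(\rho,\theta)$; your explicit minimization of $y_1^2+y_2^2-2y_1y_2\cos(2\theta)$ over $[\rho,1]^2$ (minimizer $(\rho,\rho)$, by convexity of the objective on the box) closes exactly that gap, and the observation that the inter-capture time is at most $\frac{y_1-y_2}{v}\le\frac{1-\rho}{v}$ makes part (i) airtight for an arbitrary starting position of the vehicle. The only cosmetic caveat is the division by zero when $A(I_0)=0$, which the paper's never-ending stream sidesteps; under the paper's definition of $c_A(I)$ your conclusion still stands.
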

\begin{proof}
In this proof, we first construct an input sequence and then determine the number of intruders captured in that input sequence by any online algorithm. Finally, we compare the performance with the performance of an optimal offline algorithm to establish the result.

Consider an online algorithm $\mathcal{A}$ and an optimal offline algorithm $\mathcal{O}$. For both algorithms, assume that the vehicle starts at the origin at time $0$.
The input instance starts at time instant $1$ with a \emph{stream} of intruders, i.e., a single intruder being released every $\frac{1-\rho}{v}$ time units apart, at location $(1,\theta)$. 
If $\mathcal{A}$ never captures any stream intruders, the stream never ends meaning the algorithm $\mathcal{A}$ will not be $c$-competitive for any constant $c\geq 1$, and the first result follows as the optimal offline algorithm can move to $(\rho,\theta)$ and capture all the stream intruders.
We thus assume $\mathcal{A}$ does capture at least one stream intruder, say the $i^{\text{th}}$ one, at time $t$. The input instance ends with the release of a burst of $c+1$ intruders that arrive at location $(1,-\theta)$ at the same time instant $t$.

We now identify how many intruders $\mathcal{A}$ can capture. First, it cannot capture stream intruders 1 through $i-1$ because the stream intruders arrive $\frac{1-\rho}{v}$ time units apart meaning the previous intruder reaches the perimeter and thus is lost just as the next stream intruder arrives.
We now show that the vehicle cannot capture any of the $c+1$ burst intruders.
At time $t$, the vehicle must be at most $r$ distance away from the $i^{\text{th}}$ stream intruder in order to capture it. Likewise, it has only $\frac{1-\rho}{v}$ time  to move to capture the $c+1$ burst intruders that arrived at time $t$. From Lemma \ref{lem:min_time} and our given conditions, $2(\rho\sin(\theta)-r)>\frac{1-\rho}{v}$ (resp. $2(\rho-r)>\frac{1-\rho}{v}$) for $\theta<\frac{\pi}{2}$ (resp. $\theta\geq \frac{\pi}{2}$), the vehicle is guaranteed to not capture the burst intruders.

On the other hand, the optimal offline algorithm
$\mathcal{O}$ can move the vehicle to location $(x,\alpha)$, as defined in Lemma \ref{lem:min_time}, until the first $i-1$ intruders have been captured and then move the vehicle to $(x,-\alpha)$ capturing the burst intruders, losing only the $i^{th}$ intruder. This concludes the proof.
\end{proof}

\begin{figure}
    \centering
    \includegraphics[scale=0.35]{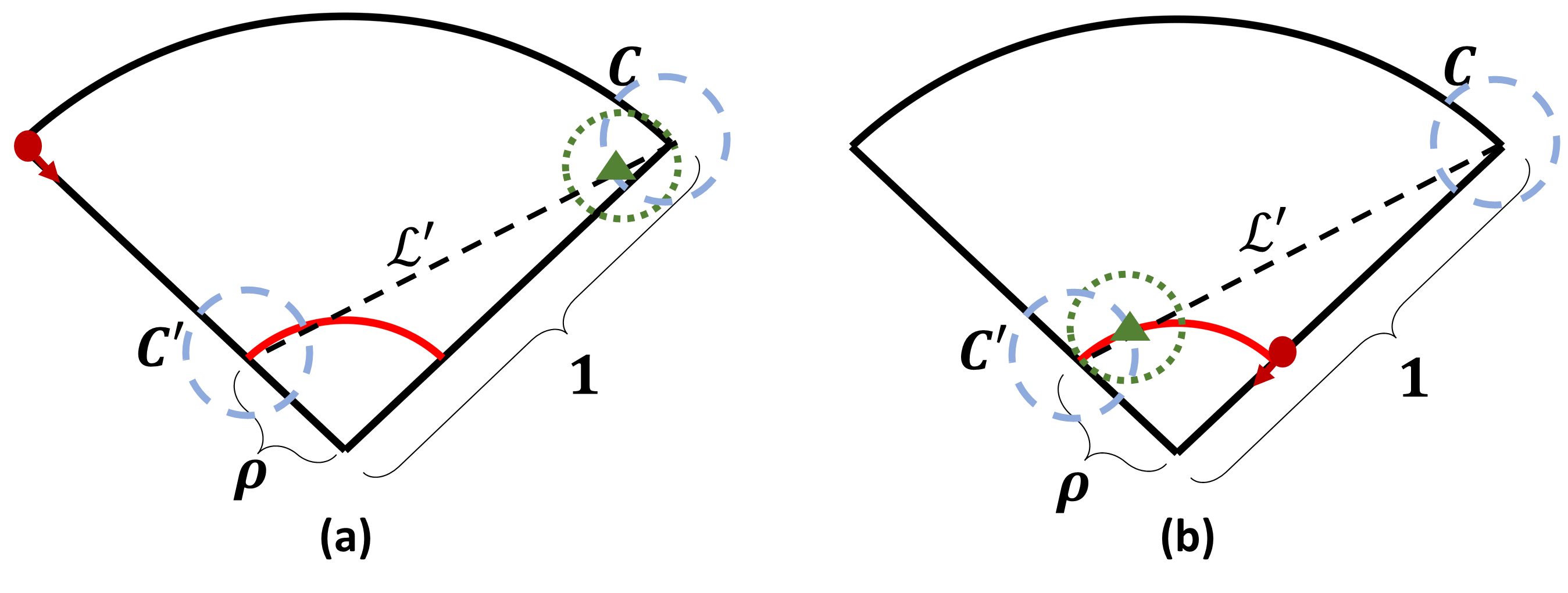}
    \caption{\small Description of the proof of Theorem \ref{thm:At_best_2} for $I_3$. The red curve denotes the perimeter. The circles $\mathcal{C}$ and $\mathcal{C'}$ are denoted by blue dashed circles and the line segment $\mathcal{L}'$ is denoted by the black dashed line. The vehicle and the intruders are denoted by a green triangle and a red dot, respectively. (a) The vehicle is located at $(t_1,\alpha_1)$ at time $t_1$. Intruder $b$ is at $(1,-\theta)$. (b) The vehicle is located at $(t_2,\alpha_2)$. Intruder b is captured but intruder a is lost.}
    \label{fig:At_best_2}
\end{figure}

We now establish a necessary condition for the existence of online algorithms having a competitive ratio of at least 2.
We first characterize locations $(t_1,\alpha_1)\in\mathcal{E}(\theta)$ and $(t_2,\alpha_2)\in\mathcal{E}(\theta)$ for the vehicle (Fig. \ref{fig:At_best_2}), where
\begin{align*}
    &t_1 = \sqrt{1+r^2-\tfrac{2r(1-\rho \cos(2\theta))}{\sqrt{1+\rho^2-2\rho \cos(2\theta)}}},\\
    & \alpha_1=\tan^{-1}\begin{pmatrix}\tfrac{\sin(\theta)\sqrt{1+\rho^2-2\rho \cos(2\theta)}-r(1+\rho)\sin(\theta)}{\cos(\theta)\sqrt{1+\rho^2-2\rho \cos(2\theta)}-r(1-\rho)\cos(\theta)}\end{pmatrix},\\
    &t_2=\sqrt{ \rho^2+r^2+\tfrac{2r\rho(\cos(2\theta)-\rho)}{\sqrt{1+\rho^2-2\rho \cos(2\theta)}}},\\
    &\alpha_2 = \tan^{-1}\begin{pmatrix}\tfrac{-\rho\sin(\theta)\sqrt{1+\rho^2-2\rho \cos(2\theta)}+r(1+\rho)\sin(\theta)}{\rho\cos(\theta)\sqrt{1+\rho^2-2\rho \cos(2\theta)}+r(1-\rho)\cos(\theta)}\end{pmatrix}.
\end{align*}
The locations $(t_1,\alpha_1)$ and $(t_2,\alpha_2)$ are determined analogously to the proof of Lemma \ref{lem:min_time}, and so, we only give an outline for it. Construct two circles $\mathcal{C}$ and $\mathcal{C}'$, each of radius $r$, centered at $(1,\theta)$ and $(\rho,-\theta)$ and consider a line segment $\mathcal{L'}$ that joins the centers of the two circles (Fig. \ref{fig:At_best_2}). Then, location $(t_1,\alpha_1)$ (resp. $(t_2,\alpha_2)$) corresponds to the intersection points of the line segment $\mathcal{L}'$ with circles $\mathcal{C}$ and $\mathcal{C}'$, respectively.
\begin{theorem}[Necessary condition for $c(\mathcal{P}) \geq 2$]\label{thm:At_best_2}
For any problem instance $\mathcal{P}(\theta,\rho,r,v)$,  $c(\mathcal{P})\geq 2$ if
\begin{align*}
\frac{1-\rho}{v}&\leq \sqrt{1+\rho^2-2\rho\cos(2\theta)}-2r,&\text{ if }& \theta\leq \frac{\pi}{2}\\
\frac{1-\rho}{v}&\leq 1+\rho-2r, &\text{ if }& \theta>\frac{\pi}{2}.
\end{align*} 
\end{theorem}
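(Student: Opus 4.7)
The plan is to extend the adversarial construction of Theorem~\ref{thm:no_c} by releasing two waves of intruders designed to exploit the online algorithm's informational handicap, targeting a ratio of exactly $2$ rather than an unbounded one. I would begin by computing the relevant geometric quantity: define $\mathcal{C}$ and $\mathcal{C}'$ to be the balls of radius $r$ centered at $(1,\theta)$ and $(\rho,-\theta)$ respectively, and by an argument analogous to the proof of Lemma~\ref{lem:min_time} (applied now to these two specific balls, with the line segment $\mathcal{L}'$ connecting their centers, detouring through the origin when $\theta > \pi/2$ since the direct chord leaves $\mathcal{E}(\theta)$), show that the shortest distance between $\mathcal{C}$ and $\mathcal{C}'$ is exactly $\sqrt{1+\rho^2-2\rho\cos(2\theta)}-2r$ for $\theta\le\pi/2$ and $1+\rho-2r$ otherwise.

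Next, I would construct the input instance. The adversary releases $N$ intruders at $(1,\theta)$ at time $0$. If the online algorithm never captures any of them, the offline algorithm captures all $N$, giving competitive ratio $\infty\ge 2$ and we are done. Otherwise, the online vehicle enters the capture strip of the first burst at some time $\bar t \in [0,(1-\rho)/v]$, at which point its position lies within $r$ of $(1-v\bar t,\theta)$. At this instant, the adversary releases a second burst of $N$ intruders at $(1,-\theta)$. This burst must be captured within $(1-\rho)/v$ of its release, and under the theorem's hypothesis the distance from the online vehicle's position to $\mathcal{C}'$ exceeds $(1-\rho)/v$, so online cannot capture it and thus collects only $N$ intruders in total.

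For the offline side, I would exhibit a two-phase trajectory: capture the first burst at its arrival point $(1,\theta)$ at time $0$, then traverse $\mathcal{L}'$ (or its two-segment analogue through the origin when $\theta>\pi/2$) to reach $\mathcal{C}'$ by time $\bar t + (1-\rho)/v$, capturing the second burst as it reaches the perimeter. The theorem's hypothesis guarantees this traversal time fits within the total window length. Offline therefore captures $2N$ intruders, yielding the claimed ratio of $2$.

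The main obstacle lies in handling the case where online, having some slack, captures the first burst near the perimeter $(\rho,\theta)$ instead of near the arrival point $(1,\theta)$; in that case the distance from online's capture position to $\mathcal{C}'$ shrinks toward $2\rho\sin(\theta)-2r$, which can be smaller than $(1-\rho)/v$ under the weaker hypothesis of this theorem. To force online into the worse configuration, one would likely augment the first burst by additional intruders timed as a stream at $(1,\theta)$ (as in the construction underlying Theorem~\ref{thm:no_c}) that pin the online vehicle near the outer boundary without giving offline any extra captures, so that the informational asymmetry remains the driving factor. The adversary's adaptive choice of $\bar t$ must be carefully aligned with this augmented schedule so that both online cannot reach $\mathcal{C}'$ in time and offline's two-phase plan along $\mathcal{L}'$ remains feasible.
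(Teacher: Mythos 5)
There is a genuine gap here, and you have identified it yourself in your final paragraph without resolving it. In the parameter regime of this theorem (which is strictly weaker than that of Theorem~\ref{thm:no_c}), the online vehicle can legally intercept the first wave far down, near $(\rho,\theta)$, and then reach a capture position for the second wave after travelling only about $2\rho\sin(\theta)-2r$; since $2\rho\sin(\theta)-2r<\sqrt{1+\rho^2-2\rho\cos(2\theta)}-2r$, this can be strictly less than $(1-\rho)/v$ under the stated hypothesis, so the online vehicle captures everything and your lower bound collapses. Your adaptive release rule actively makes this worse: because the second burst only appears at the instant $\bar t$ when the vehicle enters the capture strip, its deadline $\bar t+(1-\rho)/v$ recedes as the online algorithm stalls, so stalling near the perimeter is a winning online strategy. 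The proposed patch of a ``pinning stream'' at $(1,\theta)$ is not worked out and does not obviously yield a clean factor of $2$: the online algorithm may simply sacrifice the stream and collect both bursts, so the stream must be large enough to affect the count, at which point it perturbs both players' totals and the ratio is no longer $2N/N$. (A smaller issue: your offline trajectory captures the first burst at $(1,\theta)$ at time $0$, which is impossible from the origin; the arrivals must be delayed by the travel time to the outer corner.)

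The paper closes this hole with a timing argument rather than a pinning one, and the two are not interchangeable. Both intruders are released essentially simultaneously at time $t_1$, the travel time from the origin to the tangency point $(t_1,\alpha_1)$ of $\mathcal{L}'$ with $\mathcal{C}$, so the deadline for the second intruder is fixed at $t_1+(1-\rho)/v$ no matter what the online vehicle does. The key computation is that the total time to capture both, $\tfrac{1-x_i}{v}+\sqrt{x_i^2+\rho^2-2x_i\rho\cos(2\theta)}-2r$ as a function of the radius $x_i$ at which the first intruder is taken, is monotonically decreasing in $x_i$ (the $1/v>1$ term dominates the derivative of the chord length), so waiting to intercept the first intruder lower down always costs more in descent time than it saves in traversal distance; the only way to get both is to be at $(t_1,\alpha_1)$ or $(t_1,-\alpha_1)$ exactly at time $t_1$. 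The factor of $2$ then comes from the paired instances $I_2,I_3$ (and $I_4,I_5$ in the strict-inequality case), which offset the two arrivals by a small $\epsilon$ in opposite orders so that whichever symmetric location the online algorithm commits to, one instance limits it to a single capture while the offline algorithm, knowing the order, captures both. To repair your argument you would need to (i) delay the first arrival to $t_1$, (ii) release the two waves essentially simultaneously rather than adaptively, and (iii) prove the monotonicity claim; at that point you have reconstructed the paper's proof.
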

\begin{proof}
The key idea is to construct input instances for which any online algorithm is guaranteed to lose half the intruders while proving that an offline algorithm exists that can intercept all intruders. All of our input instances consists of two intruders denoted by $a$ and $b$ that arrive at location $(1,\theta)$ and $(1,-\theta)$, respectively, and we assume that the vehicle starts at the origin.

% We first characterize the locations $(t_1,\alpha_1)\in \mathcal{E}(\theta)$ and $(t_2,\alpha_2)\in \mathcal{E}(\theta)$ for the vehicle that will be useful in establishing the result in this proof. 
Two cases arise; (i) $\theta\leq \frac{\pi}{2}$ and (ii) $\theta>\frac{\pi}{2}$. We first consider case (i), i.e., $\theta\leq \frac{\pi}{2}$.

Consider that $\frac{1-\rho}{v} = \sqrt{1+\rho^2-2\rho\cos(2\theta)}-2r$ and consider an input instance $I_1$ in which both intruders $a$ and $b$ arrive at time instant $t_1$. This is the time that the vehicle takes to move from the origin directly to location $(t_1,\alpha_1)$.
We claim that the best way for any algorithm to capture both intruders is to capture either intruder $a$ or $b$ exactly at time $t_1$, i.e., as soon as it arrives and then move to capture the second intruder in minimum time. The explanation is as follows. 

The total time taken by the vehicle to capture both the intruders in the worst case is $\frac{1-x_i}{v}+\sqrt{x_i^2+\rho^2-2x_i\rho\cos(2\theta)}-2r$, where $\rho\leq x_i\leq 1$ is the radial component of the location of the first of the two intruders at the time of capture. The expression of the total time is determined as follows: The term $\frac{1-x_i}{v}$ is the intercept time for the first intruder. For the second term, 
construct a line segment $\mathcal{L}$ joining two points $(x_i,\theta)$ (resp. $(x_i,-\theta)$) and $(\rho,-\theta)$ (resp. $(\rho,\theta)$). Then, the length of the line segment $\mathcal{L}$ is given by $\sqrt{x_i^2+\rho^2-2x_i\rho\cos(2\theta)}$ from which we subtract $2r$ to account for the capture radius, to obtain the second term.
As $\frac{1-x_i}{v}+\sqrt{x_i^2+\rho^2-2x_i\rho\cos(2\theta)}-2r$ is monotonically decreasing function of $x_i$, its minimum is achieved at $x_i=1$. This establishes our claim that the minimum time any algorithm can take is to capture one intruder exactly when it arrives followed by the second intruder at $\sqrt{1+\rho^2-2\rho\cos(2\theta)}-2r$.

We now describe how an offline algorithm can capture both the intruders in the input instance $I_1$.
At time 0, the vehicle starts at the origin and moves towards location $(t_1,\alpha_1)$ capturing the intruder at location $(1,\theta)$ exactly at time $t_1$. Then the vehicle moves directly to location $(t_2,\alpha_2)$ exactly at time $t_1+\sqrt{1+\rho^2-2\rho\cos(2\theta)}-2r$ capturing the second intruder at $(\rho,-\theta)$. 
Note that placing the vehicle at $(t_1,\alpha_1)$ (resp. $(t_2,\alpha_2)$) ensures that the location $(1,\theta)$ (resp. $(\rho,-\theta)$) is on the circumference of the capture circle of the vehicle.
Therefore, an algorithm that hopes to be better than $2$-competitive must capture both the intruders in this input instance and the only way to do so is to move to either location $(t_1,\alpha_1)$ or $(t_1,-\alpha_1)$ arriving exactly at time $t_1$. 

Now consider input instances $I_2$ and $I_3$. In $I_2$, intruder $a$ arrives at time $t_1$ and intruder $b$ arrives at time $t_1+\epsilon$, where $\epsilon<L=2\sin(\theta)\begin{pmatrix}1-\frac{r(1+\rho)}{\sqrt{1+\rho^2-2\rho\cos(2\theta)}}\end{pmatrix}$ and $L$ denotes the minimum time required by the vehicle to move from $(t_2,\alpha_2)$ to $(t_2,-\alpha_2)$. In $I_3$, intruder $b$ arrives at time $t_1$ and intruder $a$ arrives at time $t_1+\epsilon$. Input instance $I_2$ (resp $I_3$) are constructed for algorithms that have the vehicle arriving at location $(t_1,-\alpha_1)$ (resp. $(t_1,\alpha_1)$) at time $t_1$.
Any algorithm that has the vehicle arriving at location $(t_1,-\alpha)$ (resp. $(t_1,\alpha_1)$) at time $t_1$ can capture only one intruder from $I_2$ (resp. $(I_3)$). 
As the solution is symmetric, we only provide the explanation for input instance $I_3$. This follows as the vehicle can capture intruder $b$ if it moves directly to location $(t_2,\alpha_2)$ (Fig. \ref{fig:At_best_2} (a)). However, as intruder $a$ arrives in at most $\epsilon<L$ time units, the vehicle will not be able to capture intruder $a$ (Fig. \ref{fig:At_best_2} (b)). An optimal offline algorithm can capture both the intruders by simply moving to $(t_1,-\alpha_1)$ at time $t_1$, capturing intruder $a$ upon arrival and then to $(t_2,-\alpha_2)$ to capture intruder $b$.
% As the solution is symmetric, we only provide the explanation for input instance $I_2$. This follows as the vehicle can capture intruder $a$ if it moves directly to location $(t_2,\alpha_2)$. However, as intruder $b$ arrives in at most $\epsilon<L$ time units, the vehicle will not be able to capture intruder $b$. An optimal offline algorithm can capture both the intruders by simply moving to $(t_1,\alpha_1)$ at time $t_1$, capturing intruder $a$ upon arrival and then to $(t_2,-\alpha_2)$ to capture intruder $b$.

We now consider the case when $\frac{1-\rho}{v} < \sqrt{1+\rho^2-2\rho\cos(2\theta)}-2r$. Consider input instances $I_4$ and $I_5$. In $I_4$, intruder $a$ arrives at time $t_1$ and intruder $b$ arrives at time $t_1+\epsilon$, where $\epsilon=\sqrt{1+\rho^2-2\rho\cos(2\theta)}-2r-\frac{1-\rho}{v}$. In $I_5$, intruder $b$ arrives at time $t_1$ and intruder $a$ arrives at time $t_1+\epsilon$. Following similar reasoning as input instance $I_2$ and $I_3$, it follows that no online algorithm can capture both intruders from input instance $I_4$ or $I_5$.

We now consider case (ii), i.e., $\theta>\frac{\pi}{2}$. Except for when $\theta=\pi$, as the line segment $\mathcal{L}'$ will not be contained completely, the vehicle must move first to the origin and then to the next intercept point. Note that, the vehicle will do the same when $\theta=\pi$. Thus, in this case, the location $(t_1,\alpha_1)$ is $(1-r,\theta)$ and location $(t_2,\alpha_2)$ is $(1+\rho-2r,-\theta)$. Following similar steps as case (i), we construct input instances $I_1,\dots,I_5$ (omitted for brevity) and show that no online algorithm can capture both the intruders from those input instances.

In summary, even restricting our input instance to $\{I_1,\dots,I_5\}$, no online algorithm can capture both intruders whereas an optimal offline algorithm can capture both the intruders. This concludes the proof.
\end{proof}
\medskip
We now turn our attention to design of algorithms that provide sufficient conditions on the competitive ratios. In the next section, we design and analyze three algorithms, characterizing their parameter regimes with provably finite competitive ratios.

\section{Algorithms}\label{sec:Algorithms}
We start by defining an \emph{angular path} for the vehicle. Let the vehicle be located at $(x,\alpha)\in\mathcal{E}(\theta)$ for any $0<x\leq 1$ and $\alpha\in[-\theta,\theta]$. An angular path is a circular arc centered at the origin defined as $\mathcal{T}(x,\underline{\beta},\overline{\beta}):= \{ (x,\beta): \underline{\beta} \leq \beta \leq \overline{\beta} \}$ for any $\underline \beta, \overline \beta \in [-\theta,\theta]$ such that $\underline \beta \leq \alpha \leq \overline \beta$ and $\underline{\beta}\neq\overline{\beta}$. We say that the vehicle completes its motion on the angular path when the vehicle returns to its starting location after moving along all of the points in $\mathcal{T}$ twice.
Once to move from the starting location $(x,\alpha)$ to $(x,\overline \beta)$ (resp. $(x,\underline \beta)$), and second, to move from location $(x,\overline \beta)$ (resp. $(x,\underline \beta)$) to location $(x,\underline \beta)$ (resp. $(x,\overline \beta)$) and then back to the starting location $(x,\alpha)$.

\subsection{Angular Sweep algorithm}
Angular Sweep is an open loop algorithm, described as follows. The vehicle starts at location $(x_S,0)$, where 
\[
x_S \in \Big [ \frac{\rho-r}{1-a\theta v}, \min\{ 1-r,\rho+r \} \Big ],
\]
and $a=2$ if $\theta=\pi$ and $a=4$ if $\theta\neq \pi$.
This choice for the location $x_S$ will be justified shortly (Theorem~\ref{thm:sweep}). 

In Angular Sweep, the vehicle moves on an angular path with $x = x_S, \underline \beta = -\theta$ and $\overline \beta = \theta$ for any $\theta\neq \pi$. For $\theta=\pi$, the vehicle moves on a circle with $x_S$ as the radius and the origin as the center.

We first define the angular sweep algorithm for $\theta\neq \pi$. At time $0$, the vehicle first picks a velocity with unit magnitude and direction tangent to the angular path, oriented to the right until it reaches $(x_S,\theta)$. Once it reaches the endpoint, the vehicle switches direction and moves towards the other endpoint, $(x_S,-\theta)$. From this moment on, the vehicle only switches direction after it reaches an endpoint. In other words, the vehicle moves on the angular path $\mathcal{T}(x_s,-\theta,\theta)$, moving towards $(x_S,\theta)$ at time $0$.

% The vehicle then repeats this process.
We now define the algorithm for $\theta=\pi$. At time $0$, the vehicle picks a velocity with unit magnitude and direction perpendicular to its position vector, oriented to the right. From this point on, the vehicle keeps on moving in the direction perpendicular to its position vector for the entire duration, i.e., the vehicle moves on a  circle of radius $x_S$ and center as the origin.

\begin{theorem}[Angular Sweep competitiveness]\label{thm:sweep}
For any problem instance $\mathcal{P}(\theta,\rho,r,v)$ such that 
\begin{equation}\label{eq:v_sweep}
v \leq \min \Big \{ \frac{2r}{(\rho+r)a\theta}, \frac{1-\rho}{(1-r)a\theta} \Big \},
\end{equation}
where $a = 2$ (if $\theta = \pi$) or $a = 4$ (if $\theta \neq \pi$), with the choice of any
\[
x_S \in \Big [\frac{\rho-r}{1-a\theta v}, \min\{1-r,\rho+r\} \Big],
\]
Angular Sweep is $1$-competitive. Otherwise, Angular Sweep is not $c$-competitive for any constant $c$.
\end{theorem}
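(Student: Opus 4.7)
The statement has two directions, which I plan to prove via a single comparison of two time scales: (a) the maximum time gap $T_{\mathrm{sweep}}$ between two consecutive instants at which the vehicle is at a given angle $\alpha\in[-\theta,\theta]$, and (b) the length of the capture window available for an intruder arriving at $(1,\alpha)$. For the $1$-competitive direction, I first analyze the sweep pattern and argue that $T_{\mathrm{sweep}}=a\theta x_S$: when $\theta\neq\pi$ the vehicle reverses at the endpoints, so the endpoint angles $\pm\theta$ are revisited only once per round trip of arc length $4\theta x_S$, while interior angles are revisited twice within the same period; when $\theta=\pi$ a full revolution of period $2\pi x_S$ visits each angle exactly once.

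I next characterize the capture window. Since the admissible interval for $x_S$ enforces $x_S+r\le 1$ and $x_S-r\le\rho$, the intruder's radial coordinate lies in the capture annulus $[x_S-r,x_S+r]$ exactly on the sub-interval $[\rho,x_S+r]$ of its descent, corresponding to a time window of length $W/v$ with $W=x_S+r-\rho$. Any instant during this window at which the vehicle reaches angle $\alpha$ yields an intruder-vehicle distance $|y-x_S|\le r$, hence a capture. Requiring $T_{\mathrm{sweep}}\le W/v$ rearranges to $x_S(1-a\theta v)\ge \rho-r$, i.e., $x_S\ge (\rho-r)/(1-a\theta v)$, which is exactly the lower bound on $x_S$ in the algorithm; the hypothesis $v\le 2r/(a\theta(\rho+r))$ together with $\rho>r$ ensures $1-a\theta v>0$. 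A direct algebraic simplification then shows the interval $[(\rho-r)/(1-a\theta v),\,\min\{1-r,\rho+r\}]$ for $x_S$ is nonempty iff (\ref{eq:v_sweep}) holds: $(\rho-r)/(1-a\theta v)\le \rho+r$ is equivalent to $v\le 2r/(a\theta(\rho+r))$, and $(\rho-r)/(1-a\theta v)\le 1-r$ is equivalent to $v\le (1-\rho)/(a\theta(1-r))$.

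For the converse, when (\ref{eq:v_sweep}) fails, I argue that for \emph{any} choice of $x_S$ Angular Sweep admits an adversarial input of unbounded competitive ratio. The adversary waits until the vehicle reaches an endpoint angle (where the worst-case gap $a\theta x_S$ is realized) and then releases a long stream of intruders at the opposite endpoint angle spaced $W/v$ apart; failure of the algebraic condition ensures $T_{\mathrm{sweep}}>W/v$, so Angular Sweep captures a vanishing fraction of the stream, while an offline algorithm parked at a suitable point on the opposite ray captures everything. Pathological choices of $x_S$ outside $[\rho-r,1-r]$ either make capture geometrically impossible (when $x_S+r<\rho$) or force the vehicle outside the environment, and a similar stream handles these.

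The main technical obstacle is the geometric bookkeeping in the capture-window analysis: the clean formula $W=x_S+r-\rho$ depends critically on \emph{both} $x_S+r\le 1$ (so the upper end of the annulus within the environment is $x_S+r$, not $1$) and $x_S-r\le \rho$ (so the lower end is $\rho$, not $x_S-r$); if either fails, the form of $W$ changes and the tight algebraic equivalence between $T_{\mathrm{sweep}}\le W/v$ and the interval endpoints of the theorem breaks. A secondary subtlety is that for $\theta\neq\pi$ the worst-case sweep gap $a\theta x_S$ is attained only at the endpoint angles, so the converse construction must time its burst to catch the vehicle precisely at an endpoint.
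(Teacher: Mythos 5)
Your analysis of the $1$-competitive direction is correct and follows essentially the same route as the paper: compare the worst-case revisit period $a\theta x_S$ of the sweep against the time $(x_S+r-\rho)/v$ that an intruder spends in the capturable radial band $[\rho,\,x_S+r]$, observe that the band takes this clean form precisely because $x_S\le \rho+r$ and $x_S\le 1-r$, and recover the interval for $x_S$ and condition \eqref{eq:v_sweep} by the same algebra. Your explicit justification that the endpoint angles realize the worst-case gap $a\theta x_S$ (interior angles being revisited with alternating gaps $2(\theta-\alpha)x_S$ and $2(\theta+\alpha)x_S$) is a point the paper compresses into a ``without loss of generality, the worst case is at $(x_S,\theta)$'' step, so that part is, if anything, more carefully argued.

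There is, however, a genuine gap in your converse construction. You release the stream at the opposite endpoint ``spaced $W/v$ apart,'' where $W/v$ is exactly the length of each intruder's capture window. Consecutive windows then abut and tile the time axis with no gaps, so \emph{every} periodic visit of the vehicle to that endpoint angle lands inside some intruder's window and captures at least one intruder. Over a stream of $n$ intruders this yields on the order of $n\,W/(v\,a\theta x_S)$ captures, a fraction bounded away from zero; the resulting ratio is a finite constant, which does not establish ``not $c$-competitive for any constant $c$.'' The fix is the timing the paper uses: release (at most) one intruder per sweep period, phased so that the intruder sits at $(x_S+r,\theta)$ exactly when the vehicle departs $(x_S,\theta)$; since $W/v<a\theta x_S$ when the algebraic condition fails, that intruder's entire window closes before the vehicle returns, so \emph{every} stream intruder is lost and the ratio is unbounded. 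Separately, be aware that once $x_S>\rho+r$ or $x_S>1-r$ the window length is no longer $W=x_S+r-\rho$ (it becomes $2r$, resp.\ is truncated at radius $1$), so the claim ``$T_{\mathrm{sweep}}>W/v$ for any $x_S$'' must be rechecked case by case with the modified window rather than inherited from the emptiness of the interval; you flag this subtlety but do not carry it through.
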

\begin{proof}
First, observe that if equation~\eqref{eq:v_sweep} holds, then the interval $ \Big [ \frac{\rho-r}{1-a\theta v}, \min\{1-r,\rho+r\} \Big]$ is non-empty and well defined. Therefore, it suffices to show that any $x_S$ from the said interval guarantees that Angular Sweep intercepts \emph{every intruder}. 

To justify the choice of $x_S$, we observe that there is no benefit for the vehicle to be located beyond a distance of $\rho+r$ and below $\rho-r$ from the origin. This follows because if the vehicle is located below $\rho-r$, then the capture circle will be completely below the perimeter and the vehicle cannot capture any intruder using the angular sweep algorithm. Moreover, for any radial location $x_S>\rho+r$, the vehicle will take $4\theta x_S$ time units to complete one angular path, whereas, in the worst case, the intruders will require $\frac{2r}{v}$ time to not get captured by the vehicle. Note that the time taken by the vehicle, i.e., $4\theta x_S$ increases as $x_S$ increases whereas the time taken by the intruders, i.e., $\tfrac{2r}{v}$ remains the same for any $x_S>\rho+r$. Thus, there will be no benefit for the vehicle to be located at a distance beyond $\rho+r$. 
To establish $1$-competitiveness, it is required that no intruders are lost by the vehicle and so, the total time taken by the vehicle to return to its starting location after completing its motion on the angular path must be at most the time required by the intruders to travel the distance of $x_S+r-\rho$. Mathematically, we require $a\theta x_S\leq \frac{x_S+r-\rho}{v}$, which implies $x_S\geq \frac{\rho-r}{1-a\theta v}$. Note that, since $\rho>r$, we require that $1-a \theta v>0$ or equivalently, $v<\frac{1}{a\theta}$. Finally, to ensure that $x_S+r$ is contained in the environment, we have $x_S\leq 1-r$. As, $\rho-r<\frac{\rho-r}{1-a\theta v}$, $\rho-r<\rho+r$, and $\rho-r<1-r$, the condition $x_S>\rho-r$ is always satisfied. Furthermore, since $\frac{1}{a\theta}<\frac{2r}{(\rho+r)a\theta}$ and $\frac{1}{a\theta}<\frac{1-\rho}{(1-r)a\theta}$, the condition $1-a\theta v>0$ always holds. Lastly, $x_S$ only exists if $\frac{\rho-r}{1-a\theta v}\leq\min\{ 1-r,\rho+r \}$ which yields $v\leq\min \{ \frac{2r}{(\rho+r)a\theta},\frac{1-\rho}{(1-r)a\theta} \}$.

We now prove that for any choice of $x_S\in [\frac{\rho-r}{1-a\theta v}, \min\{1-r,\rho+r\}]$, angular sweep algorithm is $1$-competitive.

% As will be explained in the proof of Theorem \ref{thm:sweep}, for Sweep to be $c$-competitive, we require $x_S$ to satisfy $x_S \leq 1-r$ and $x_S \leq \frac{r-\rho}{a\theta v-1}$, where $a=2$ if $\theta=\pi$ and $a=4$ otherwise. As we want to determine the location $x_S$ which is farthest from the origin, the optimal location $x_S$ for Sweep is $x_S = \min\{ \rho+r, \frac{r-\rho}{a\theta v-1}, 1-r \}$. Since $r<\rho$, $x_S$ exists only if $a\theta v<1$.
% Assume that both $a\theta x_S\leq \frac{x_S+r-\rho}{v}$ and $x_S+r\leq 1$ holds. Note that $x_S\leq \frac{x_S+r-\rho}{v}$ implies that $x_S$ exists.
% In the worst case, any intruder $i$ will require $\frac{x_S+r-\rho}{v}$ time to reach the perimeter. 
Without loss of generality, we assume that, in the worst-case, at time instant $t$, the vehicle has just left the location $(x_S,\theta)$ and intruder $i$ is located at $(x_S+r,\theta)$.
The vehicle takes a total of $a\theta x_S$ time units to return to the location $(x_S,\theta)$ whereas the intruder takes $\frac{x_S+r-\rho}{v}$ time units to reach the perimeter. Thus, in order to ensure that the intruder $i$ is captured and takes time no less than $\frac{x_S+r-\rho}{v}$, we require $a\theta x_S\leq (x_S+r-\rho)/v$ and $x_S\leq 1-r$, respectively, which holds given that $x_S \in [\frac{\rho-r}{1-a\theta v}, \min\{1-r,\rho+r\}]$. 

For any $x_S\notin[\frac{\rho-r}{1-a\theta v}, \min\{1-r,\rho+r\}]$, we can construct an input instance with stream of intruders always arriving at $(1,\theta)$ such that when the vehicle leaves location $(x_S,\theta)$, an intruder is located at $(x_S+r,\theta)$. Since $x_S\notin[\frac{\rho-r}{1-a\theta v}, \min\{1-r,\rho+r\}]$, all intruders will be lost and the result follows. 
\end{proof}

\begin{figure}[t]
    \centering
    \includegraphics[scale=0.45]{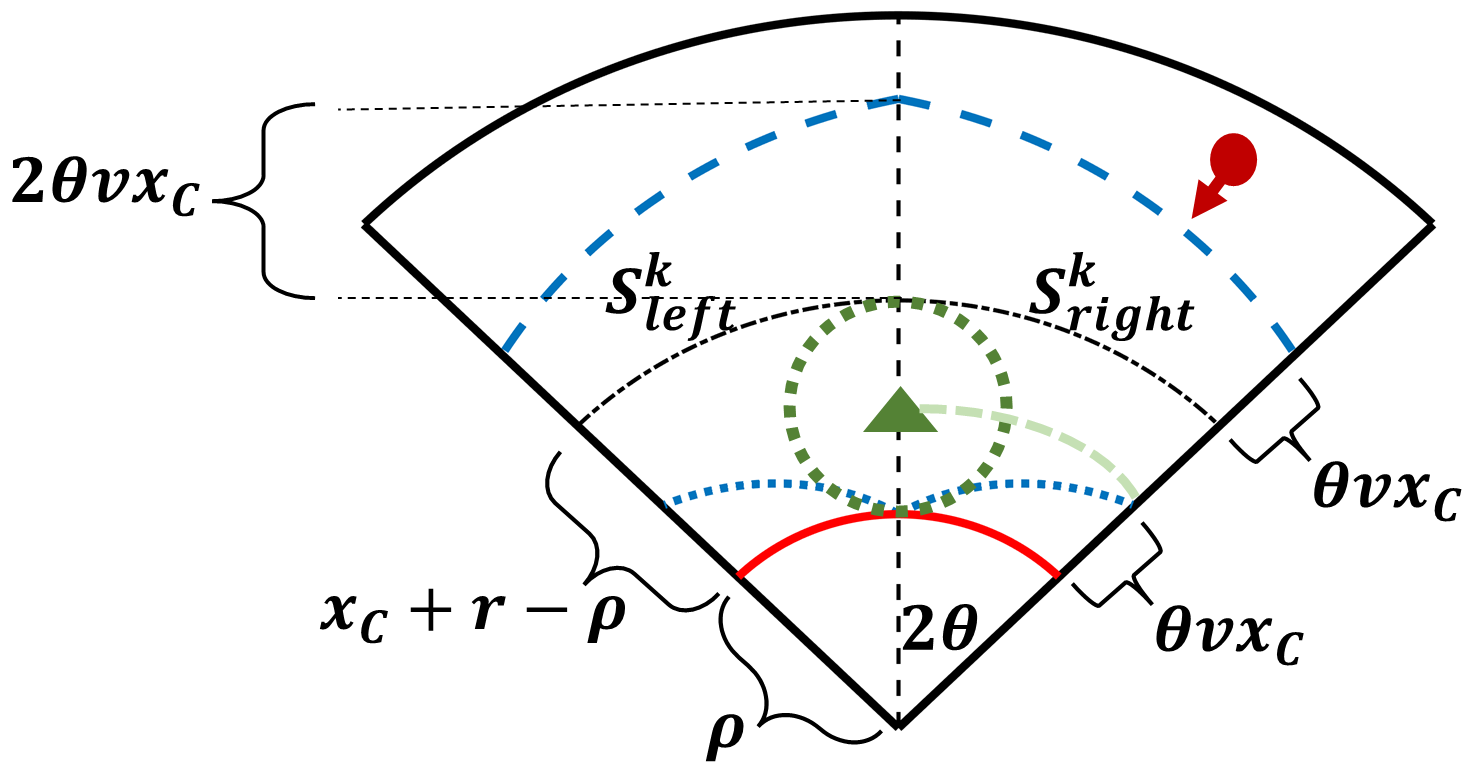}
    \caption{\small Setup for ConCaC algorithm for $x_C=r+\rho$. All intruders that are on the right side of the black dashed line and between the blue curves are in the set $\subscr{S}{right}^k$. All intruders that are on the left side of the black dashed line and between the blue curves are in the set $\subscr{S}{left}^k$. Green dashed curve denotes the angular path.}
    \label{fig:CaC_setup}
\end{figure}

\subsection{Conical Compare and Capture}
We now describe our second algorithm, Conical Compare and Capture (ConCaC) and establish that ConCaC is $2$-competitive for parameter regimes outside of those required for Angular Sweep to be $1$-competitive.

An epoch $k$ is defined as the time interval in which the vehicle completes its motion on angular path with a specified distance $x_C\in [\frac{\rho-r}{1-2\theta v}, \min\{ \rho+r,\frac{1-r}{1+v\theta} \} ]$ which is fixed for all epochs. The choice of $x_C$ will be justified shortly (Theorem \ref{thm:CaC}). ConCaC sets the parameters $\underline{\beta}$ and $\overline{\beta}$ for the angular path at the start of every epoch. Denote $|\subscr{S}{right}^k|$ (resp. $|\subscr{S}{left}^k|$) as the total number of intruders in the set $\subscr{S}{right}^k$ (resp. $\subscr{S}{left}^k$) in epoch $k$, where
\begin{multline*}
    \subscr{S}{right}^k(\rho,v) := \{( y,\beta):\rho+\beta x_{C}v< y\leq\\
    \min\{1,x_c+r+(2\theta-\beta)vx_C\}\forall \beta \in [0,\theta]\} \text{ and }\\
    \subscr{S}{left}^k(\rho,v) := \{( y,\beta):\rho-\beta x_{C}v < y \leq  \min\{1,x_c+r\\
    +(2\theta+\beta)vx_C\}\forall \beta \in (0,-\theta]\}.
\end{multline*}

\begin{algorithm}[t]
\DontPrintSemicolon
	\SetAlgoLined
	Select $x_C \in [ \frac{\rho-r}{1-2\theta v}, \min\{ \rho+r,\frac{1-r}{1+v\theta}] \}$.\\
	Wait until time $1-\min\{ 1,x_C+r+2\theta vx_C \}$.\\
	\For{each epoch $k\geq1$}{
	\eIf{$|\subscr{S}{left}^k|<|\subscr{S}{right}^k|$}{
	    Set $\underline{\beta}=0,\overline{\beta}=\theta$\\
		Move on angular path to location $(x_C,\theta)$\\
	Move on angular path to return to $(x_{C},0)$
		\;}
	{ 
	Set $\underline{\beta}=-\theta,\overline{\beta}=0$\\
	Move on angular path to location $(x_C,-\theta)$\\
	Move on angular path to return to $(x_{C},0)$
	\;
	}}
	\caption{Conical Compare-and-Capture Algorithm}
	\label{algo:CaC}
\end{algorithm}

Conical Compare and Capture algorithm is defined in Algorithm \ref{algo:CaC} and is summarized as follows.
At the start of every epoch $k$, the vehicle compares the total number of intruders in the set $\subscr{S}{left}^k$ and $\subscr{S}{right}^k$. If $|\subscr{S}{left}^k|<|\subscr{S}{right}^k|$, then the vehicle moves on the angular path ($x=x_C,\underline{\beta}=0, \overline{\beta}=\theta$) until it reaches location $(x_C,\theta)$ and then returns back to $(x_C,0)$, moving on the same angular path, capturing intruders on its way. Otherwise, the vehicle moves on an angular path ($x=x_C,\underline{\beta}=-\theta, \overline{\beta}=0$) towards the location $(x_C,-\theta)$. The vehicle then returns back to $(x_C,0)$ moving on the same angular path. After returning to $(x_C,0)$, the vehicle repeats the same for the next epoch.

For the initial case, we assume time $0$ as the time when the first intruder arrives in the environment. The vehicle starts at location $(x_C,0)$ and waits for $1-\min\{1,x_C+r+2\theta vx_C \}$ amount of time and then begins its first epoch.

\begin{lemma}\label{lem:CaC_finite_compt_condtn}
Any intruder that lies beyond\footnote{ intruders with radial coordinate more than $x_c+r+(2\theta-\beta)v x_C$} the location $(x_c+r+(2\theta-\beta)v x_C,\beta),~\forall \beta \in [-\theta,\theta]$ in epoch $k$, will either be contained in the set $\subscr{S}{left}^{k+1}$ or in $\subscr{S}{right}^{k+1}$ in epoch $k+1$ and is not lost at the start of epoch $k+1$ if $v\leq \frac{x_C+r-\rho}{2\theta x_C}$.
\end{lemma}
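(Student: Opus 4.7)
My plan is to directly verify both the lower and upper radial bounds that define membership in $\subscr{S}{right}^{k+1}$ (for intruders with $\beta\ge 0$) or $\subscr{S}{left}^{k+1}$ (for intruders with $\beta<0$); the choice between the two is dictated by the sign of $\beta$, since intruders move radially and their angular coordinate is fixed. By the reflective symmetry of the environment and the algorithm about the $\alpha=0$ axis, it suffices to treat $\beta\in[0,\theta]$. Each epoch of Algorithm~\ref{algo:CaC} lasts exactly $2\theta x_C$ time units, since the vehicle traces an angular path of angular extent $2\theta$ at radius $x_C$ (outbound to $(x_C,\pm\theta)$ and back to $(x_C,0)$) at unit speed. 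Hence an intruder at $(y_0,\beta)$ at the start of epoch $k$ is at $(y_1,\beta)$ at the start of epoch $k+1$, where $y_1:=y_0-2\theta v x_C$.

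Next I would verify the lower bound $y_1>\rho+\beta v x_C$. The hypothesis $y_0>x_C+r+(2\theta-\beta)vx_C$ gives
\[
y_1 \;>\; x_C+r+(2\theta-\beta)vx_C-2\theta v x_C \;=\; x_C+r-\beta v x_C.
\]
The desired inequality $x_C+r-\beta v x_C>\rho+\beta v x_C$ is equivalent to $x_C+r-\rho>2\beta v x_C$, which, since $\beta\le\theta$, is implied by the lemma's hypothesis $v\le\frac{x_C+r-\rho}{2\theta x_C}$. The same chain also yields $y_1>\rho$, so the intruder has not crossed the perimeter by the start of epoch $k+1$, i.e., it is not lost.

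Finally I would verify the upper bound $y_1\le\min\{1,\,x_C+r+(2\theta-\beta)vx_C\}$. Since intruders enter at radial distance $1$, $y_0\le 1$ and so $y_1\le 1-2\theta v x_C\le 1$. When $x_C+r+(2\theta-\beta)vx_C\ge 1$, the minimum equals $1$ and we are done; otherwise, I would invoke the algorithm's upper restriction $x_C\le\frac{1-r}{1+v\theta}$ to bound $y_1$ against the linear expression. I expect this sub-case --- the upper-bound check when the linear upper endpoint of the set lies strictly below $1$ --- to be the main obstacle, since the margin tightens as $\beta\to\theta$; by contrast, the lower-bound step matches the stated sufficient condition on $v$ exactly and is where the specific numerical constant $\frac{x_C+r-\rho}{2\theta x_C}$ arises.
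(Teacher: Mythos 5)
Your core computation is correct and is essentially the paper's argument: an epoch lasts $2\theta x_C$, so an intruder descends by $2\theta v x_C$ between epoch starts, and comparing the hypothesis $y_0 > x_C+r+(2\theta-\beta)vx_C$ against the lower boundary $\rho+\abs{\beta}vx_C$ of $\subscr{S}{left}^{k+1}$/$\subscr{S}{right}^{k+1}$ yields exactly the condition $v\leq \frac{x_C+r-\rho}{2\theta x_C}$ and the fact that the intruder is still strictly above the perimeter. The paper carries this out only for the worst case $\beta=\theta$ (an intruder sitting just above $(x_C+r+\theta v x_C,\theta)$, which slips past the vehicle at $(x_C,\theta)$ by an infinitesimal margin); your version for general $\beta$ is a mild strengthening that lands on the same constant.

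The genuine gap is the upper-bound sub-case you flag and leave open, and it cannot be closed the way you hope: when $x_C+r+(2\theta-\beta)vx_C<1$, an intruder that arrives late in epoch $k$ sits near radius $1$ at the start of epoch $k+1$ and satisfies the lemma's hypothesis, yet lies well above the set's upper endpoint (e.g.\ $x_C=0.5$, $r=0.1$, $\theta v$ small gives an endpoint near $0.6$). The constraint $x_C\leq\frac{1-r}{1+v\theta}$ only guarantees the set fits inside the environment; it does not pull such intruders into $\subscr{S}{left}^{k+1}\cup\subscr{S}{right}^{k+1}$. So the lemma, read literally for \emph{all} intruders beyond the threshold, is false, and the paper's proof quietly avoids this by treating only the intruder \emph{just above} the threshold, for which the upper bound is immediate (it descends by $2\theta vx_C$ while the threshold at its angle is unchanged, and $x_C+r\leq 1$). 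The repair, which is what Theorem 4.5 actually needs, is to conclude that every such intruder is counted in \emph{some} epoch $k'>k$ before being lost: your own inequality shows the band at angle $\beta$ has radial width $x_C+r-\rho+2(\theta-\abs{\beta})vx_C\geq x_C+r-\rho\geq 2\theta vx_C$ whenever its upper endpoint is below $1$, i.e.\ the band is at least one epoch's descent wide, so no intruder can skip over it. Either restrict the statement to the threshold intruder as the paper does, or restate and prove the conclusion in this "eventually considered, never lost" form; as written, your proof correctly identifies the obstacle but does not resolve it.
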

\begin{proof}
Without loss of generality, assume that $\abs{\subscr{S}{left}^k}<\abs{\subscr{S}{right}^k}$ at epoch $k$. The total time taken by the vehicle to capture intruders in $\subscr{S}{right}^k$ and return back to its starting location $(x_C,0)$ is $2\theta x_C$. In the worst-case, in order for any intruder $i$ to be not considered in the start of epoch $k$, the intruder $i$ must be located just above $(x_C+r+\theta vx_C,\theta)$ at the start of epoch $k$, i.e., $\beta=\theta$. By the time the vehicle reaches location $(x_C,\theta)$, intruder $i$ will be located just above the location $(x_C+r,\theta)$ and will not be captured. Since $v\leq \frac{x_C+r-\rho}{2\theta x_C}$, the intruder will be at least $\theta v x_C$ distance away from the perimeter at the end of epoch $k$. Thus, this intruder will be considered in the start of epoch $k+1$ given the definition of set $\subscr{S}{left}^{k+1}$ and $\subscr{S}{right}^{k+1}$. Clearly, as the intruder $i$ will be considered for comparison in epoch $k+1$, it is not lost unless the vehicle decides to move to $\subscr{S}{left}^{k+1}$ in epoch $k+1$. This concludes the proof.
\end{proof}

\begin{theorem}[ConCaC competitiveness] \label{thm:CaC}
For any problem instance $\mathcal{P}(\theta,\rho,r,v)$ such that 
\begin{equation}\label{eq:v_CaC}
    v\leq \min \Big\{ \frac{r}{\theta (\rho+r)},\frac{1-\rho}{\theta(2-3r+\rho)} \Big \}, 
\end{equation}
with the choice of any 
\[x_C\in \Big [\frac{\rho-r}{1-2\theta v}, \min\{ \rho+r,\frac{1-r}{1+v\theta} \}\Big ],\]
ConCaC algorithm is $2$-competitive. 
% if Lemma \ref{lem:CaC_finite_compt_condtn} and $x_C+r+v\theta x_C\leq 1$ holds.
\end{theorem}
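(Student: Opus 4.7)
The plan is to establish $2$-competitiveness via a per-epoch charging argument, after first verifying that condition \eqref{eq:v_CaC} makes the admissible interval for $x_C$ nonempty. Routine algebra shows that $\tfrac{\rho-r}{1-2\theta v}\le \rho+r$ is equivalent to $v\le \tfrac{r}{\theta(\rho+r)}$ and that $\tfrac{\rho-r}{1-2\theta v}\le \tfrac{1-r}{1+v\theta}$ is equivalent to $v\le \tfrac{1-\rho}{\theta(2-3r+\rho)}$, matching the two terms in \eqref{eq:v_CaC} exactly; these bounds also force $1-2\theta v>0$, so the lower endpoint is positive.

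Next I would show that during any epoch $k$ the vehicle captures every intruder in the chosen side. Without loss of generality suppose ConCaC elects to move right. At any angle $\beta\in[0,\theta]$ the vehicle visits $(x_C,\beta)$ twice: once at time $\beta x_C$ on the outgoing leg and once at time $(2\theta-\beta)x_C$ on the return. An intruder at angle $\beta$ with starting radial coordinate $y$ is captured whenever its radial position lies in $[x_C-r,x_C+r]$ at one of these two instants, giving the union $[x_C-r+\beta v x_C,\,x_C+r+\beta v x_C]\cup [x_C-r+(2\theta-\beta)v x_C,\,x_C+r+(2\theta-\beta)v x_C]$. These two intervals are contiguous provided $v\le r/(\theta x_C)$, which follows from $v\le r/(\theta(\rho+r))$ and $x_C\le \rho+r$. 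The merged interval then covers exactly the admissible radii in $\subscr{S}{right}^k$, so every right-set intruder is intercepted before breaching.

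The final step is the competitive accounting. Let $c_k$ and $\ell_k$ denote the number of intruders captured and lost during epoch $k$, respectively. The capture analysis above gives $c_k=\max\{|\subscr{S}{left}^k|,|\subscr{S}{right}^k|\}$. Lemma \ref{lem:CaC_finite_compt_condtn}, whose hypothesis $v\le (x_C+r-\rho)/(2\theta x_C)$ is equivalent to $x_C\ge \tfrac{\rho-r}{1-2\theta v}$, guarantees that any intruder not included in $\subscr{S}{left}^k\cup \subscr{S}{right}^k$ at the start of epoch $k$ reappears in the active sets of epoch $k+1$ without breaching in the interim. Consequently every intruder that is lost during epoch $k$ must come from the non-chosen side of that epoch, giving $\ell_k\le \min\{|\subscr{S}{left}^k|,|\subscr{S}{right}^k|\}\le c_k$. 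Summing over all epochs yields total captures $\ge$ total losses, and since every intruder is ultimately either captured or lost, the total number of intruders $N$ satisfies $N\le 2\,\mathrm{ConCaC}(I)$. Combined with $\mathrm{OPT}(I)\le N$, this gives $\mathrm{OPT}(I)\le 2\,\mathrm{ConCaC}(I)$.

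The main obstacle I expect is making the epoch-by-epoch accounting watertight: one must avoid double-counting an intruder that persists in the non-chosen side of epoch $k$ and also reappears in an active set of epoch $k+1$, and one must rule out corner cases in which intruders slip between epochs, either by arriving mid-epoch or by drifting below the set's lower bound $\rho+\beta vx_C$. The crucial leverage for both comes from the capped upper bound $\min\{1,x_C+r+(2\theta-\beta)vx_C\}$ in the definitions of $\subscr{S}{left}^k$ and $\subscr{S}{right}^k$, which, together with the precise interval chosen for $x_C$, synchronize how an intruder's radial position evolves with how the active sets shift from one epoch to the next.
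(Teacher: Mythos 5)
Your proposal is correct and follows essentially the same route as the paper: verify via the two algebraic equivalences that \eqref{eq:v_CaC} makes the $x_C$-interval nonempty, invoke Lemma \ref{lem:CaC_finite_compt_condtn} to ensure every intruder eventually appears in $\subscr{S}{left}^k\cup\subscr{S}{right}^k$ for some epoch, and conclude by the majority-selection argument that at least half of all intruders are captured. Your treatment is in fact more detailed than the paper's on two points it leaves implicit --- the two-pass interval-union argument showing the chosen set is captured entirely, and the bookkeeping caveat about intruders persisting from a non-chosen set into the next epoch.
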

\begin{proof}
First, observe that if equation~\eqref{eq:v_CaC} holds, then the interval $[\frac{\rho-r}{1-2\theta v}, \min\{\frac{1-r}{1+v\theta},\rho+r\}]$ is non-empty. Therefore, it suffices to show that for any $x_C$ from the said interval, ConCaC algorithm is $2$-competitive. 

We first justify the choice of $x_C$. Similar to the proof of Theorem \ref{thm:sweep}, we observe that there is no benefit for the vehicle to be located beyond a distance of $\rho+r$ and below $\rho-r$ from the origin. 
From Lemma \ref{lem:CaC_finite_compt_condtn}, in order to ensure that every intruder that is not considered in the set $\subscr{S}{left}^k$ and $\subscr{S}{right}^k$ in epoch $k$, is considered in either set $\subscr{S}{left}^{k+1}$ or $\subscr{S}{right}^{k+1}$ in epoch $k+1$, we require $x_C\geq \frac{\rho-r}{1-2\theta v}$.
Note that, since $\rho>r$, we require that $1-2 \theta v>0$ or equivalently, $v<\frac{1}{2\theta}$. Finally, to ensure that $x_C+r+x_C v\theta$ is contained in the environment, we require $x_C\leq \frac{1-r}{1+v\theta}$. 
Note that $\rho-r<\frac{\rho-r}{1-2\theta v}$, $\rho-r<\rho+r$. Also, if equation \eqref{eq:v_CaC} holds then, $\rho-r<\frac{1-r}{1+v\theta}$. Thus, the condition $x_C>\rho-r$ always holds. Furthermore, since $\frac{1}{2\theta}<\frac{r}{(\rho+r)\theta}$ and $\frac{1}{2\theta}<\frac{1-\rho}{\theta(2-3r+\rho)}$, the condition $1-2\theta v>0$ always holds. Lastly, $x_C$ only exists if $\frac{\rho-r}{1-2\theta v}\leq\min\{ \frac{1-r}{1+v\theta},\rho+r \}$ which yields $v\leq\min \{ \frac{r}{(\rho+r)\theta},\frac{1-\rho}{\theta(2-3r+\rho)} \}$.

We now prove that for any choice of $x_C\in [\frac{\rho-r}{1-2\theta v}, \min\{\frac{1-r}{1+v\theta},\rho+r\}]$, ConCaC algorithm is $2$-competitive.

Lemma \ref{lem:CaC_finite_compt_condtn} ensures that every intruder will belong to either set $\subscr{S}{left}^k$ or $\subscr{S}{right}^k$ in every epoch $k$. 
In every epoch $k$, the vehicle compares the total number of intruders on either side contained in the set $\subscr{S}{left}^k$ and $\subscr{S}{right}^k$ and moves to the side where the number of intruders is higher. Thus, it is guaranteed that the vehicle will capture at least half of the total number of intruders that arrive in the environment, assuming that an optimal offline algorithm can capture all intruders.
\end{proof}

%\medskip
%Unlike the previous two algorithms, in the next algorithm, the vehicle does not follow an angular path. Instead, the idea is to divide the environment into sectors and position the vehicle close to the perimeter in a specific sector.

\begin{figure}[t]
    \centering
    \includegraphics[scale=0.29]{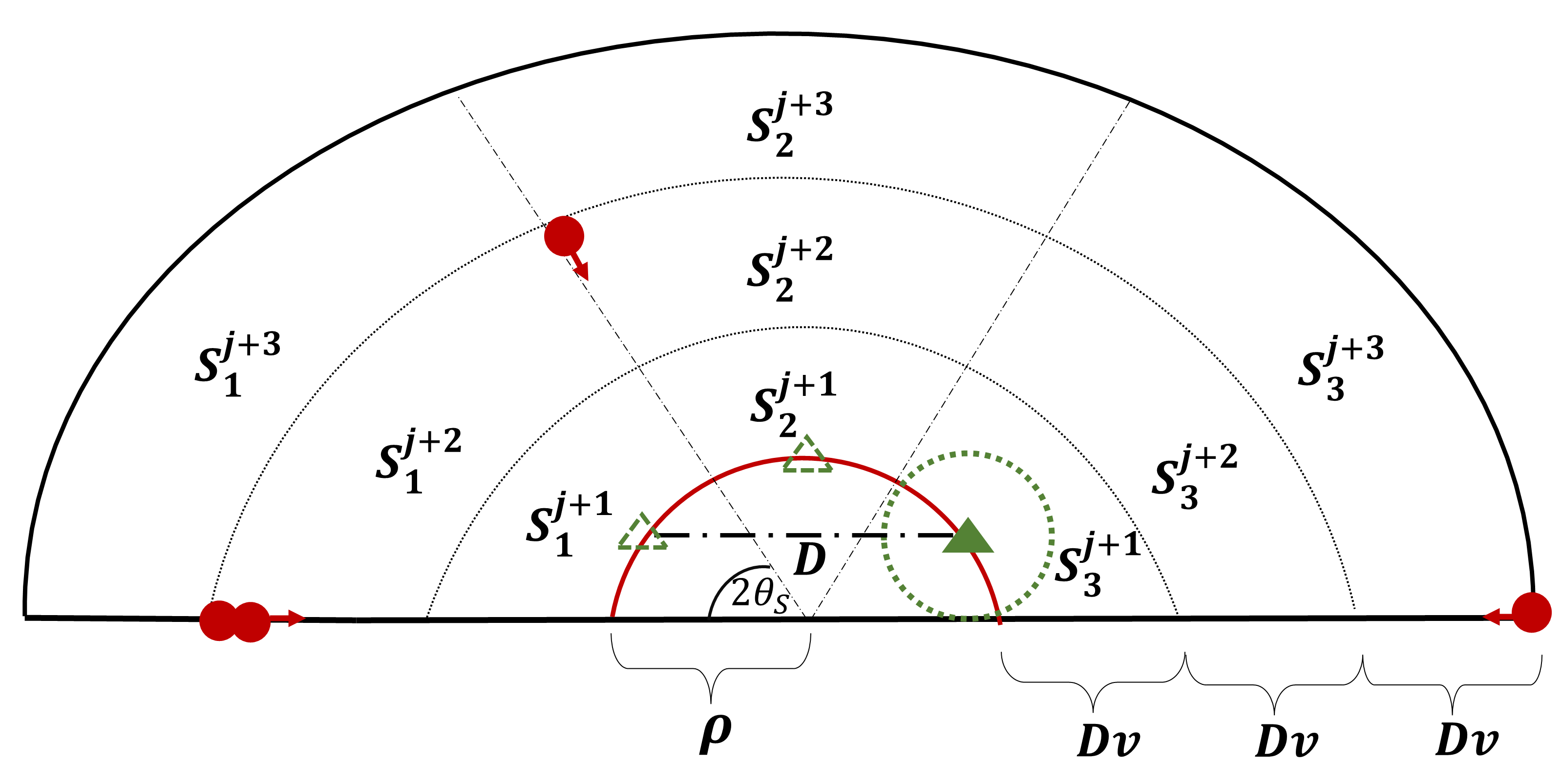}
    \caption{\small Breakdown of environment into $n_s=3$ sectors and time intervals of length $D$. The dashed green triangles denote the resting point of each sector. Vehicle is located at the resting point $(x_3,\alpha_3)$ of sector $N_3$.}
    \label{fig:SNP}
\end{figure}
\subsection{Stay Near Perimeter}

Unlike the previous two algorithms, in this algorithm, the vehicle does not follow an angular path. Instead, the idea is to divide the environment into sectors and position the vehicle close to the perimeter in a specific sector.

We partition the environment $\mathcal{E}(\theta)$ into $n_s=\lceil\frac{\theta}{\theta_s}\rceil$ sectors, each with angle $2\theta_s=2\arctan(\frac{r}{\rho})$. Since $r<\rho$, $\theta_s<\frac{\pi}{4}$. Let $N_l, l\in \{1,\dots,n_s\}$ denote the $l^{th}$ sector, where $N_1$ corresponds to the leftmost sector in the environment (Fig. \ref{fig:SNP}). Then, a \textit{resting point} $(x_l,\alpha_l)\in \mathcal{E}(\theta)$ of a sector $N_l$ is defined as the location for the vehicle such that when positioned at that location, the entire perimeter contained in that sector is contained completely within the capture radius of the vehicle. Mathematically, the resting point, $(x_l,\alpha_l)$, for a sector $N_l$ is defined as $(\tfrac{\rho}{\cos(\theta)},(l-\tfrac{n_s+1}{2})2\theta_s)$.
Further, we define $D$ as the distance between the two resting points that are farthest in the environment as
\begin{equation}
    D =
    \begin{cases}
    2\frac{\rho}{\cos(\theta_s)}\sin((n_s-1)\theta_s), \text{ if } (n_s-1)\theta_s < \frac{\pi}{2}\\
    2\frac{\rho}{\cos(\theta_s)}, \text{ otherwise.}
    \end{cases}\label{def:Dist}
\end{equation}
% Case 1: If $(n_s-1)\theta_s < \frac{\pi}{2}$ then, $D = 2\frac{\rho}{\cos(\theta_s)}\sin((n_s-1)\theta_s)$\\
% Case 2: If $(n_s-1)\theta_s \geq \frac{\pi}{2}$ then, $D = 2\frac{\rho}{\cos(\theta_s)}$.\\
Note that $n_s=1\Rightarrow D=0$. This means that there is only one sector, i.e., the environment and the capture circle can contain the entire perimeter. Thus, the vehicle is to be positioned at the unique corresponding resting point and must capture all intruders that  arrive in the environment.

\begin{algorithm}[t]
\DontPrintSemicolon
	\SetAlgoLined
	Stay at origin until time $D$. \\
% 	Initialize sets $\mathcal{S} = \{N_1,\dots, N_{n_s}\}$,
% 	$\tilde{\mathcal{S}} =\{ \abs{N_1},\dots, \abs{N_{n_s}}\}$.\\
	$k^* = \argmax_{k\in \{1,\dots,n_s \}} \{\eta_i^1,\dots,\eta_i^{n_s}\}$,
	$N_i=N_{k^*}$\\
	Move to $(x_i,\alpha_i)$\\
	Wait until time $3D$.\\
	Assumes vehicle is at $(x_i,\alpha_i)$ in sector $N_i$\\
	\For{each $j\geq1$}{
%     Initialize sets $\mathcal{S} = \{N_1,\dots, N_{n_s}\}$ and \\
% 	$\tilde{\mathcal{S}} =\{ \abs{N_1},\dots, \abs{N_{n_s}}\}$.\\
	$k^* = \argmax_{k\in \{1,\dots,n_s \}} \{\eta_i^1,\dots,\eta_i^{n_s}\}$\\
	$N_o=N_{k^*}$\\
% 	\eIf{$\mathcal{\tilde{S}}$ is non-empty and $\abs{S_{m,R}^{j+2}}\geq \abs{S_{i,P}^{j+1}}$}{
    \eIf{$N_o\neq N_i$ and $\abs{S_{o}^{j+2}}\geq \abs{S_{i}^{j+1}}$}{
	    Move to $(x_o,\alpha_o)$\\
	    Capture $\abs{S_{o}^{j+2}}$
		\;}
	{ Stay at $(x_i,\alpha_i)$\\
	  Capture $\abs{S_{i}^{j+1}}$
	\;}
	}
	\caption{Stay Near Perimeter Algorithm}
	\label{algo:SNP}
\end{algorithm}

After partitioning the environment into $n_s$ sectors, Stay Near Perimeter (SNP) algorithm divides the environment into three time intervals of time length $D$ each. Specifically, the $j^{th}$ interval for any $j>0$ is defined as the time interval $[(j-1)D,jD]$. In order to ensure a finite competitiveness for this algorithm, we require $\frac{1-\rho}{v}\geq 3D$, i.e., the intruders require at least $3D$ time to reach the perimeter.
% We say that a set $S$ of intruders are in the same sector $n_P^i$ (resp. $n_C^0$) as the vehicle if the vehicle is located at $(x_P^i,\alpha_P^i)$ (resp. $(x_C^0,\alpha_C^0)$) and $S\subset \{ (y,\gamma) : \rho\leq y \leq 1, \alpha_P^i-\theta_s <\gamma\leq \alpha_P^i+\theta_s\}$. 
% For $j>1$, let $S_{same}^j$ be the intruders that arrive in the same sector in the $j^{th}$ interval. 
% Similarly, we say that a set $S$ of intruders are in some other sector $n_Q^l$ than the vehicle, if the vehicle is located at $(x_P^i,\alpha_P^i)$ in sector $n_P^i\neq n_Q^l$ and $S\subset \{ (y,\gamma) : \rho\leq y \leq 1, \alpha_Q^l-\theta_s <\gamma\leq \alpha_Q^l+\theta_s\}$. 
For any $j\geq1$, let $S_l^j$ be the set of intruders that arrive in a sector $N_l$ in the $j^{th}$ interval (Fig \ref{fig:SNP}).

The SNP algorithm (defined in Algorithm \ref{algo:SNP}) is based on the following two steps: First, select a sector in the environment with maximum number of intruders. Second, determine if it is beneficial to switch over to that sector. 
These two steps are achieved by two simple comparisons; \textbf{C1} and \textbf{C2} detailed below.

In the first comparison \textbf{C1}, SNP determines that sector which has the most number of intruders in the last two intervals as compared to the total number of intruders in the entire sector in which the vehicle is located. In particular, suppose that the vehicle is located at the resting point of sector $N_i$ at the $j$-th iteration. Corresponding to any sector $N_l$, we define $\eta_i^l$ as
% , l\neq i$, define \underline{$\abs{N_l}\triangleq \abs{S_l^{j+2}}+\abs{S_l^{j+3}}$ and define $\abs{N_i}\triangleq \abs{S_i^{j+1}}+\abs{S_i^{j+2}}+\abs{S_i^{j+3}}$} \sdb{[There is an issue with this notation. Just stating $N_k$, it is not evident whether $k = i$ or $k\neq i$. I think what you need is something like 
\[
\eta_i^l\triangleq \begin{cases} \abs{S_l^{j+2}}+\abs{S_l^{j+3}}, &\text{if } l \neq i,\\
\abs{S_i^{j+1}}+\abs{S_i^{j+2}}+\abs{S_i^{j+3}}, &\text{if } l = i.
\end{cases}
\]
% I think the rest of the section can benefit from this notation. Try it out...]}
%Let $\mathcal{S}=\{ N_1,\dots,N_{n_s} \}$ %and $\mathcal{\tilde{S}}_i=\{ \eta_i^1,\dots,\eta_i^{n_s} \}$. 
Then, SNP selects the sector $N_{k^*}$, where $k^* = \argmax_{k\in \{1,\dots,n_s \}} \{\eta_i^1,\dots,\eta_i^{n_s}\}$. % and $\mathcal{S}_k$ denotes the $k^{th}$ element of the set $\mathcal{S}$. 
% Then, SNP selects the sector with maximum number of intruders that satisfies $\abs{N_l} > \abs{N_i}$ (\textbf{C1}).
% SNP selects all of the sectors that satisfy $\abs{n_Q^l} > \abs{n_P^i}$, $\forall n_Q^l \neq n_P^i$ and adds $n_Q^l$ and $\abs{n_Q^l}$ into the set $\mathcal{S}$ and set $\mathcal{\Tilde{S}}$. We denote $\mathcal{S}(i)$ as the $i^{th}$ element of set $\mathcal{S}$.
% SNP then picks the sector with maximum number of intruders.
In case there are multiple sectors with same number of intruders, then SNP breaks the tie as follows. If the tie includes the sector $N_i$, then SNP selects $N_i$. Otherwise, SNP selects the sector with the maximum number of intruders in the interval $j+2$. If this results in another tie, then this second tie can be resolved by picking the sector with the least index. Let the sector chosen as the outcome of \textbf{C1} be $N_o, o\in \{1,\dots,n_s \}$. 

For the second comparison \textbf{C2}, if the sector chosen is $N_o, o \neq i$, and the total number of intruders in the set $S_o^{j+2}$ is no less than the total number of intruders in $S_i^{j+1}$, then SNP moves the vehicle to $(x_o,\alpha_o)$ arriving in at most $D$ time units. Then the vehicle waits at that location to capture all intruders in $S_o^{j+2}$. Otherwise (i.e., if $S_o^{j+2} < S_i^{j+1}$ or $o = i$), the vehicle stays at its current location $(x_i,\alpha_i)$, captures intruders in $S_i^{j+1}$ and then reevaluates after time interval of $D$. 

At time $0$, the vehicle waits for $D$ time units at location $(0,0)$ after the first intruder arrives in the environment. Then the vehicle moves to the sector which has the maximum number of intruders in $S_i^1,~\forall N_i$ sectors in the environment. The vehicle then waits until time $3D$. To ensure that no intruder is lost until time $3D$, we require $\tfrac{\rho}{\cos(\theta_s)}\leq 2D$.

\begin{lemma}\label{lem:1captureset}
Let the vehicle be located at a resting point $(x_i,\alpha_i)$ of a sector $N_i, i\in \{ 1,\dots,n_s \}$. Then, for any $j\geq 1$, the vehicle always captures intruders in either $S_i^{j+1}$ or $S_o^{j+2}$, where $N_o$ denotes the sector selected by SNP after \textbf{C1}.
\end{lemma}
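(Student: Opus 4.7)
The plan is a case split on the outcome of the second comparison \textbf{C2} at iteration $j$, since the two branches of SNP commit the vehicle to exactly one of the two candidate sets of intruders. The underlying geometric observation, used in both cases, is the defining property of a resting point $(x_l,\alpha_l)$: when the vehicle occupies it, the capture disk of radius $r$ contains the entire perimeter arc of sector $N_l$. Consequently, any radially moving intruder whose trajectory stays in $N_l$ must enter the capture disk strictly before its radial coordinate drops to $\rho$, so it is captured as long as the vehicle is stationed at $(x_l,\alpha_l)$ at the moment of the crossing.

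In the stay branch, the vehicle remains at $(x_i,\alpha_i)$. Applying the resting-point property with $l=i$, every intruder in $S_i^{j+1}$ (which by definition arrives in sector $N_i$) enters the capture disk before reaching the perimeter, and is therefore captured. The only point to verify is that the vehicle is still located at $(x_i,\alpha_i)$ when these crossings occur; this follows from the standing assumption $\tfrac{1-\rho}{v}\ge 3D$ combined with the iteration structure of SNP, which keeps the vehicle at $(x_i,\alpha_i)$ throughout the relevant window.

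In the move branch, the vehicle travels from $(x_i,\alpha_i)$ to $(x_o,\alpha_o)$. By the definition of $D$ as the largest pairwise distance between resting points, the transit takes at most $D$ time units. The intruders counted in $S_o^{j+2}$ begin arriving at the circumference after the vehicle leaves $(x_i,\alpha_i)$, and each of them then requires an additional $\tfrac{1-\rho}{v}\ge 3D$ time units to reach the perimeter of $N_o$. Hence by the time any intruder of $S_o^{j+2}$ is within distance $r$ of the perimeter of $N_o$, the vehicle has already been parked at $(x_o,\alpha_o)$ for some time, and the same resting-point argument, now applied to $N_o$, yields its capture.

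The main obstacle is the timing bookkeeping in the move branch: one must nest three windows --- the vehicle's transit time ($\le D$), the arrival window of $S_o^{j+2}$, and the intruders' circumference-to-perimeter travel time ($\ge 3D$) --- and verify that the transit completes strictly before any intruder of $S_o^{j+2}$ reaches the boundary of the capture disk around $(x_o,\alpha_o)$. This reduces to a one-line comparison of $D$ against $3D$ using the assumption $\tfrac{1-\rho}{v}\ge 3D$, after which the geometric conclusion is identical to the stay case and the disjunction in the lemma follows.
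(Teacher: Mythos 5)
Your proof is correct and follows essentially the same route as the paper's: a case split on the algorithm's branches (stay at $(x_i,\alpha_i)$ and collect $S_i^{j+1}$, or move to $(x_o,\alpha_o)$ in at most $D$ time and collect $S_o^{j+2}$). The only difference is that you spell out the resting-point geometry and the timing comparison against $\tfrac{1-\rho}{v}\ge 3D$, which the paper's proof leaves implicit.
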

\begin{proof}
Consider that the sector $N_o=N_i$. Then, according to Algorithm \ref{algo:SNP}, the vehicle stays at its current position and captures $S_i^{j+1}$ and the result follows.

Now consider that the sector $N_o\neq N_i$. Then there are two cases: (i) Either the vehicle decides to stay at its current position for $D$ time interval, i.e., $\abs{S_i^{j+1}}>\abs{S_o^{j+2}}$ or (ii) the vehicle decides to move to the resting point corresponding to the sector $N_o$, i.e., $\abs{S_i^{j+1}}\leq \abs{S_o^{j+2}}$. In case (i), the vehicle stays at its current location and captures $\abs{S_i^{j+1}}$. In case (ii), the vehicle spends at most $D$ time units to moves to the resting point of the sector $N_o$ and then captures intruders in the set $S_o^{j+2}$. This concludes the proof.
\end{proof}

To establish the competitive ratio of Algorithm SNP, we use an accounting analysis in which captured intervals \textit{pay} for the lost intervals or equivalently, captured intervals are \textit{charged} for the intervals lost. 
The following lemmas will jointly establish the competitive ratio of SNP algorithm.

\begin{figure*}
\centering
\begin{subfigure}[Type (a) captured intervals. Vehicle stays at $(x_3,\alpha_3)$ to capture $S_3^{j+1}$ and $S_3^{j+2}$.]{\includegraphics[width=0.4\textwidth]{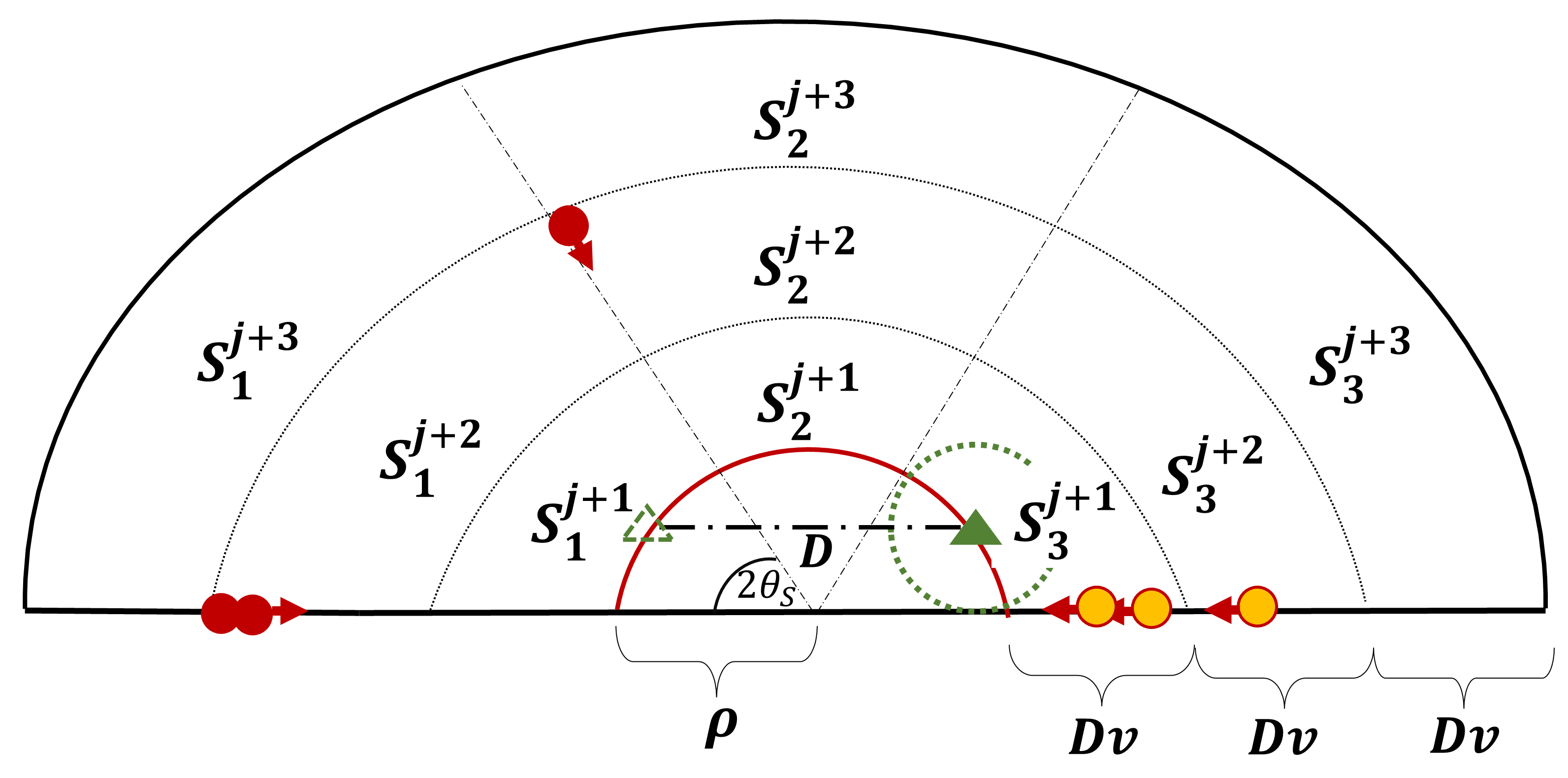}}
\end{subfigure}
\subfigure[Type (b) captured intervals. Vehicle stays at $(x_3,\alpha_3)$ to capture $S_3^{j+1}$ and then moves to $(x_1,\alpha_1)$ to capture $S_1^{j+3}$.]{\includegraphics[width=0.4\textwidth]{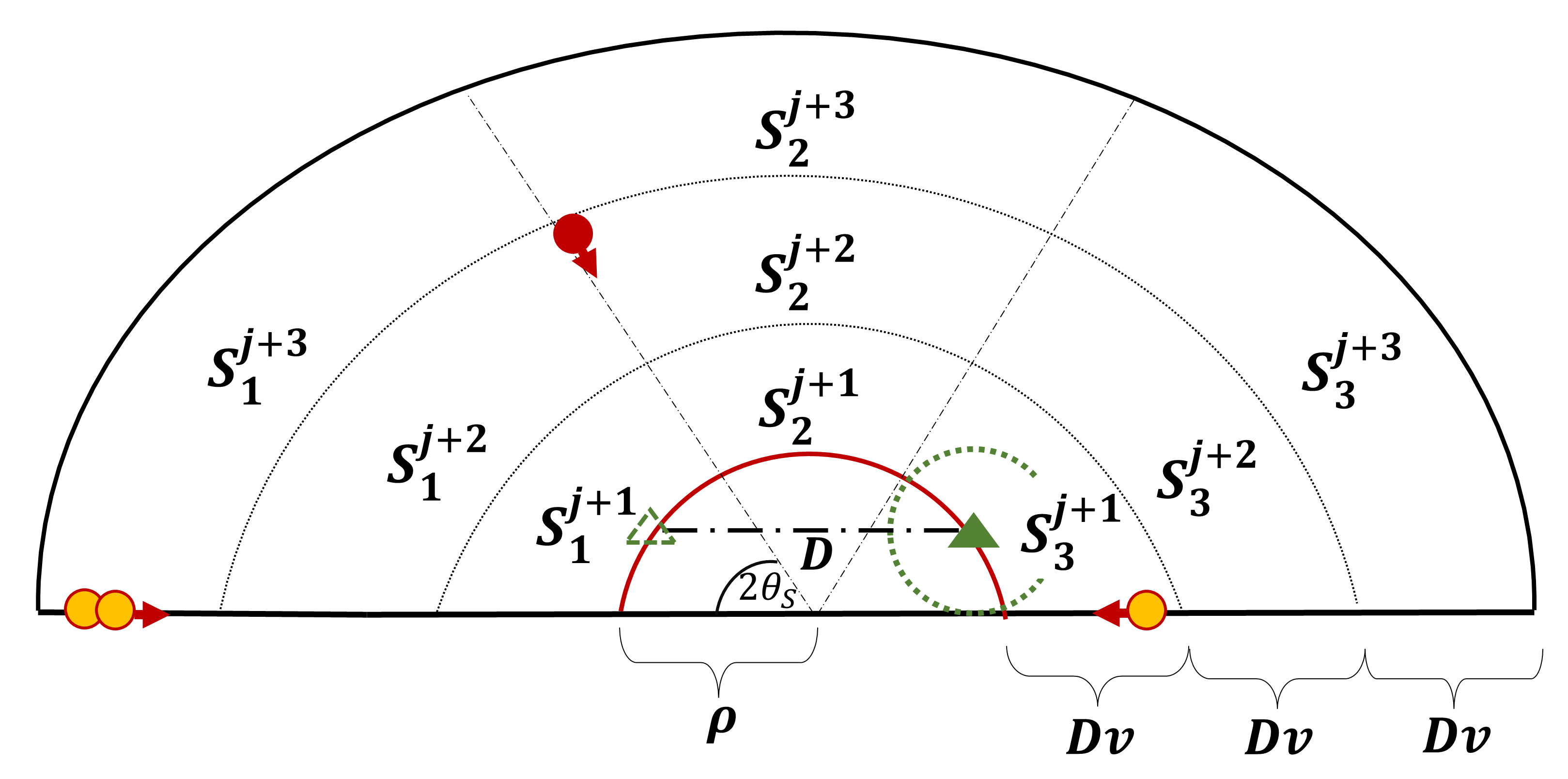}}
\subfigure[Type (c) captured intervals. Vehicle moves to $(x_1,\alpha_1)$ to capture $S_1^{j+2}$ and $S_1^{j+3}$.]{\includegraphics[width=0.4\textwidth]{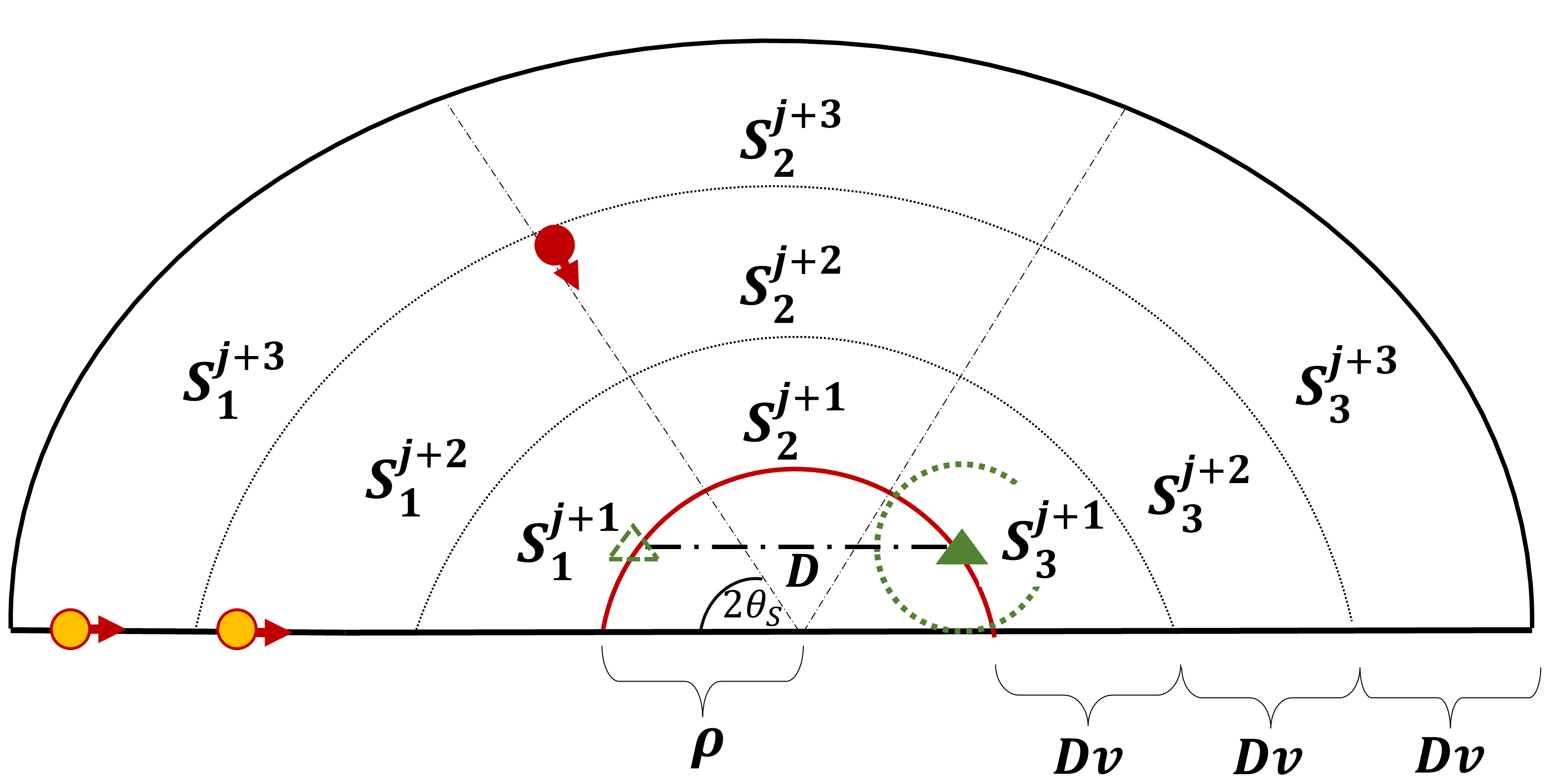}}
\subfigure[Type (d) captured intervals. Vehicle moves to $(x_1,\alpha_1)$ to capture $S_1^{j+2}$ and then to $(x_3,\alpha_3)$ to capture $S_3^{j+4}$.]{\includegraphics[width=0.4\textwidth]{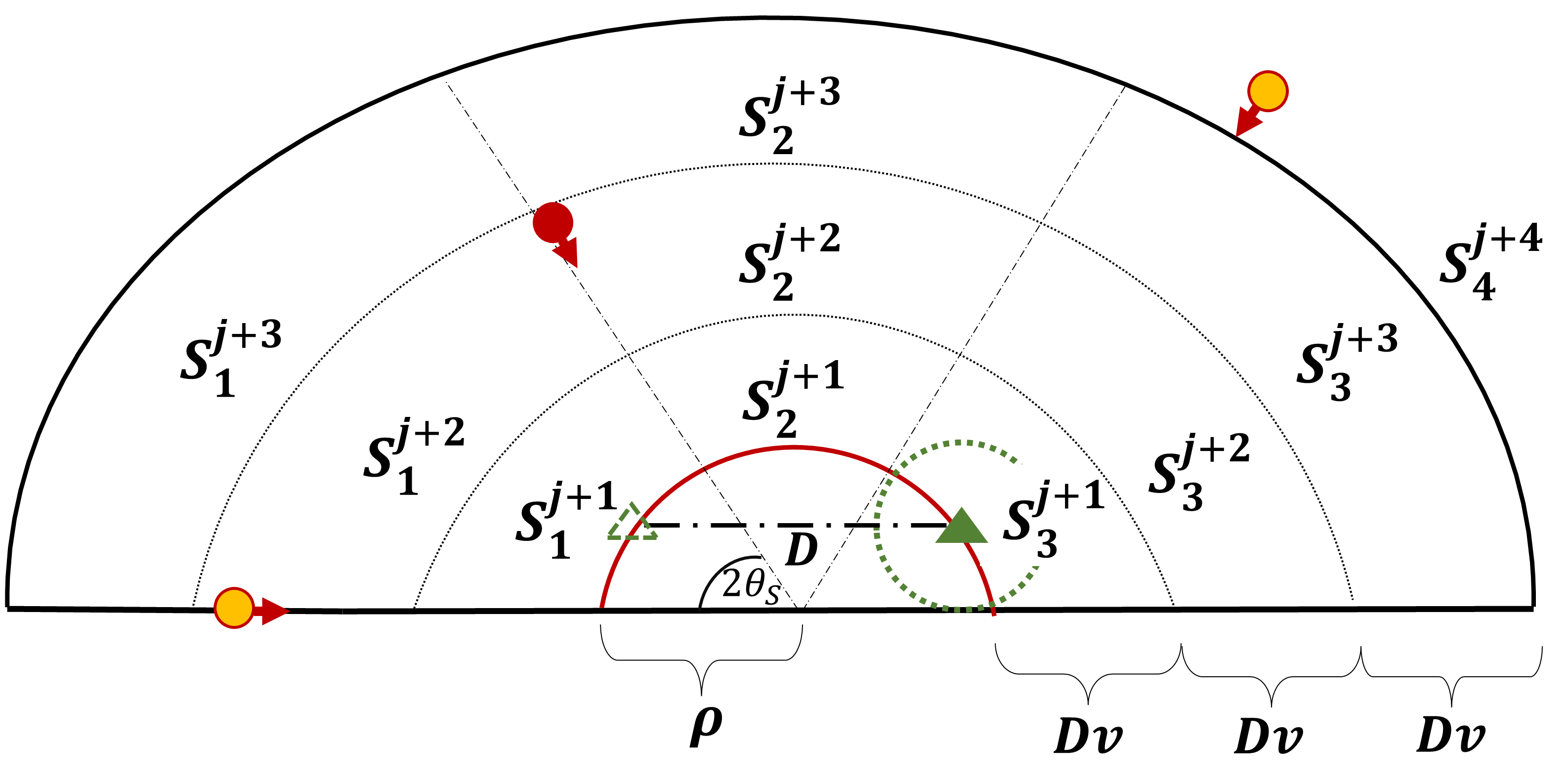}}
\caption{Depiction of captured intervals for the proof of Lemma \ref{lem:SNP_part1} for $i=3$ and $o=1$. The intruders of the respective captured intervals are represented by yellow circles. Note that for type (d) captured intervals,  intruders of interval $\subscr{S}{4}^{j+4}$ have not arrived in the environment. Information of interval $\subscr{S}{4}^{j+4}$ is revealed to the SNP algorithm once the vehicle moves to $(x_1,\alpha_1)$.}
\label{fig:SNP_capt_int}
\end{figure*}

\begin{lemma}\label{lem:SNP_part1}
In algorithm SNP, any two consecutive captured intervals pay for total $3(n_s-1)$ lost intervals.
\end{lemma}
\begin{proof}
As Lemma \ref{lem:1captureset} ensures that the vehicle always captures an interval of intruders, any two consecutive captured intervals can be classified into four types (Fig. \ref{fig:SNP_capt_int}); (a) stay at the current location and capture both intervals on the same side, (b) stay at the current location and capture an interval and then move to the resting point of $N_o$ and capture the second interval, (c) move to the resting point of $N_o$ and capture both intervals, and finally (d) move to the resting point of sector $N_o$ and capture an interval and then move to the resting point of another sector, $N_{o'},~o'\in \{1,\dots,n_s \}\setminus\{o\}$ and capture an interval.

The explanation for Type (a)  captured intervals $S_i^{j+1}$ and $S_i^{j+2}$ is as follows. At time instant $jD$ and $(j+1)D$, since vehicle decides to capture $S_i^{j+1}$ and $S_i^{j+2}$ (comparison $\textbf{C1}$ and $\textbf{C2}$), it loses $S_l^{j+2}$ and $S_l^{j+3}$ intruders from other sectors, i.e., $\forall l\in \{1,\dots,n_s\}\setminus\{i\}$. Thus the captured intervals $S_i^{j+1}$ and $S_i^{j+2}$ are charged $2n_s-2$ times. The remaining $n_s-1$ charge is explained as follows. Since the vehicle is currently located at $(x_i,\alpha_i)$ it must be that the vehicle captured $S_i^j$. This implies that comparison $\textbf{C1}$ must have yielded sector $N_i$ at either time instant $(j-2)D$ (if the vehicle was located at $(x_l,\alpha_l),l\neq i)$) or $(j-1)D$ (if the vehicle was located at $(x_i,\alpha_i)$). Recall that $\textbf{C1}$ requires at least $S_i^j$ and $S_i^{j+1}$ for the comparison. As the vehicle captured $S_i^j$, the captured interval $S_i^{j+1}$ is charged another $n_s-1$ times for both $S_l^{j}$ and $S_l^{j+1}$ combined for all $l\neq i$.

Following similar calculations, type (b) captured intervals $S_i^{j+1}$ and $S_o^{j+3}$ are also charged $3(n_s-1)$ times. $n_s-1$ times to pay for lost intervals $S_l^{j}$ and $S_l^{j+1}$ combined and $n_s-1$ times for lost interval $S_l^{j+2}$, $\forall l\in \{1,\dots,n_s\}\setminus\{i\}$. The remaining $n_s-1$ pay is as follows. Once for all lost intervals $S_i^{j+2}$, $S_i^{j+3}$, and $S_i^{j+4}$ combined and $n_s-2$ pay for lost intervals $S_{l'}^{j+3}$, and $S_{l'}^{j+4}$ combined $\forall l'\in \{1,\dots,n_s\}\setminus\{i,o\}$ (comparison \textbf{C1} and \textbf{C2} at time $(j+1)D$). 

Type (c) captured intervals $S_o^{j+2}$ and $S_o^{j+3}$ pay once for lost intervals $S_i^{j+1}$, $S_i^{j+2}$, and $S_i^{j+3}$ combined as well as $n_s-2$ times for the lost intervals $S_l^{j+2}$ and $S_l^{j+3}$, $\forall l \in \{1,\dots,n_s\}\setminus\{ i,o\}$ (comparison \textbf{C1} and \textbf{C2} at time $jD$). The captured intervals also pay $n_s-1$ times for lost intervals $S_l^{j+4}$ for all $N_l,l\neq o$ sectors. Finally, the last $n_s-1$ pay is for lost interval $S_{l'}^{j}$ and $S_{l'}^{j+1}$, $\forall l'\in \{1,\dots,n_s\}\setminus\{i\}$ as the vehicle captured $S_o^{j+2}$ instead of $S_i^{j+1}$ (comparison $\textbf{C1}$).

For type (d) captured intervals, without loss of generality, consider that after capturing its first interval, $S_o^{j+2}$, in sector $N_o$, the vehicle moves back to sector $N_i$ to capture its second interval $S_i^{j+4}$, i.e., $N_{o'}=N_i$. Type (d) captured interval $S_o^{j+2}$ pays once for $S_i^{j+1}$, $S_i^{j+2}$, and $S_i^{j+3}$ combined and $n_s-2$ times for the lost intervals $S_l^{j+2}$ and $S_l^{j+3}$ combined, $\forall l\in\{1,\dots,n_s \}\setminus\{i,o\}$ (comparison \textbf{C1} and \textbf{C2} at time $j$).
The captured interval $S_i^{j+4}$ pays once for $S_o^{j+3}$, $S_o^{j+4}$, and $S_o^{j+5}$ combined and $n_s-2$ times for the lost intervals $S_l^{j+4}$ and $S_l^{j+5}$ combined (comparison \textbf{C1} and \textbf{C2} at time $j+2$). 
The final pay is $n_s-1$ times for lost intervals $S_{l'}^{j}$ and $S_{l'}^{j+1}$ combined, $\forall l'\in\{1,\dots,n_s\}\setminus\{i\}$ as the vehicle captured $S_o^{j+2}$ and instead of $S_i^{j+1}$ (comparison $\textbf{C1}$). 

Since each type of captured intervals are charged $3(n_s-1)$ times, the result is established.
\end{proof}

We now establish that each lost interval is fully accounted for by the captured intervals.
Since SNP directs the vehicle to stay at a resting point of any sector for some time interval, it can be viewed as a sequence of \textit{traces}, in which the vehicle spends some number of intervals at one resting point and some number of intervals at another. Each trace is defined by a set $\{ k_1, k_2, \dots, k_{n_s} \}$, where each element $k_l$, $ l\in \{ 1,\dots, n_s \}$ denotes the number of intervals that the vehicle decides to capture by staying at the corresponding resting point of the sector $N_l$. 

\begin{figure*}[t]
\centering
\begin{subfigure}[$\theta=\pi/4$]{\includegraphics[width=0.3\textwidth]{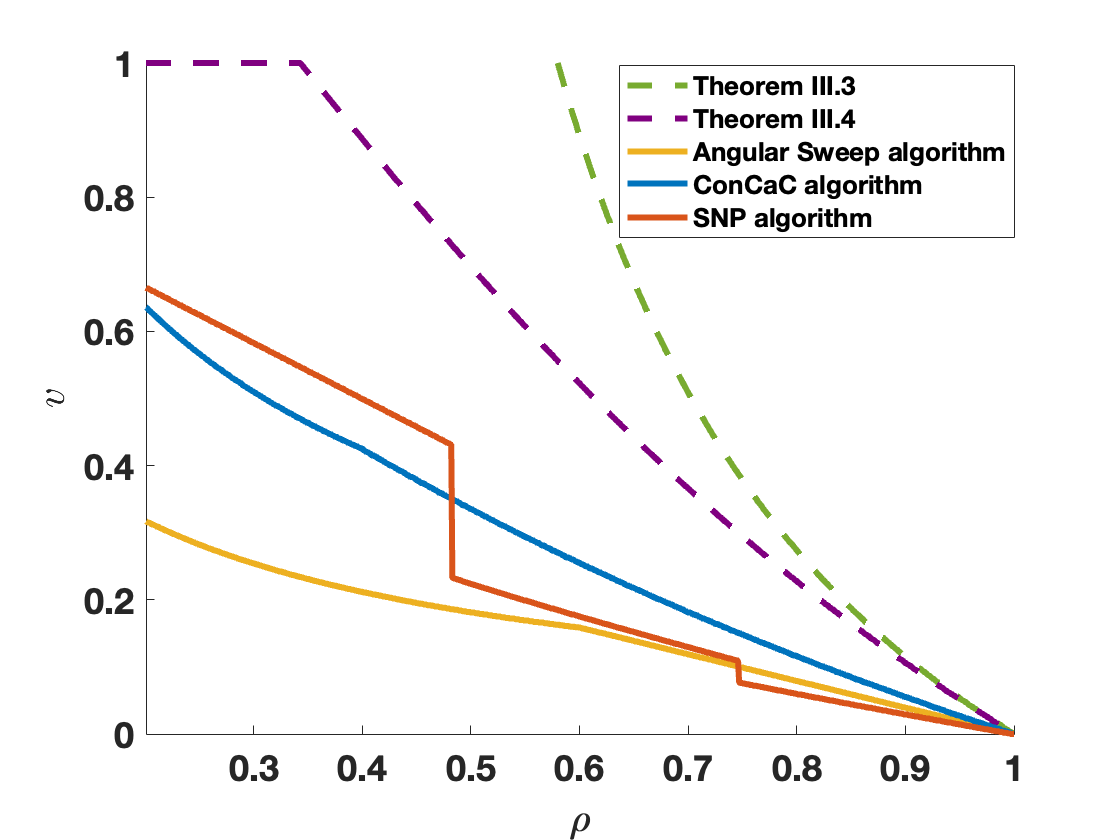}}
\end{subfigure}
\subfigure[$\theta=\pi/3$]{\includegraphics[width=0.3\textwidth]{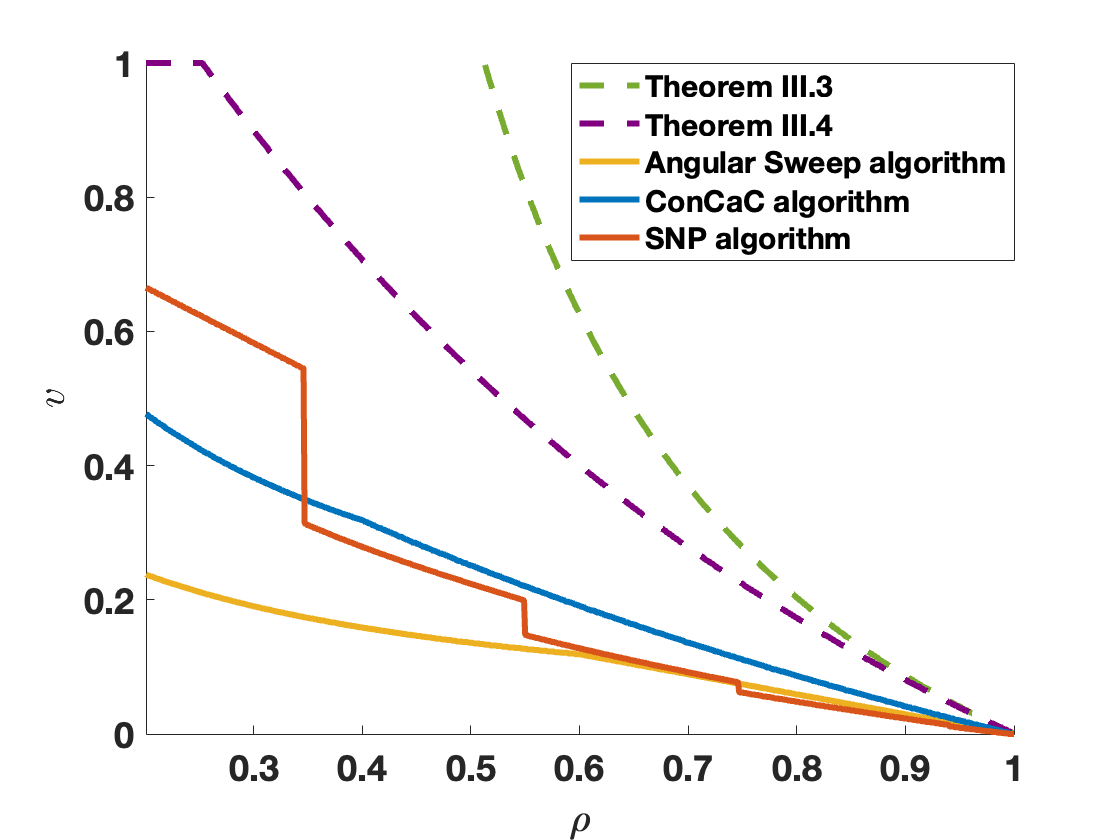}}
\subfigure[$\theta=\pi/2$]{\includegraphics[width=0.3\textwidth]{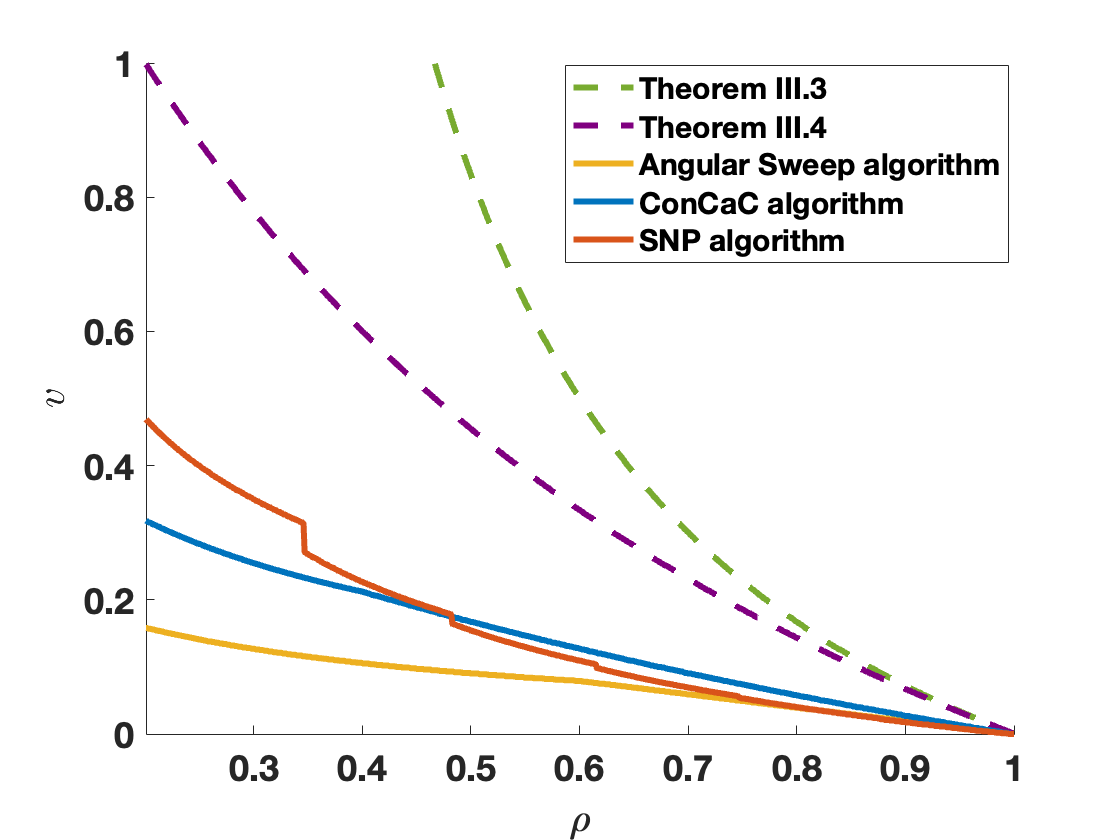}}
\caption{Parameter regime plot for $r=0.2$.}
\label{fig:param_regime_r}
\end{figure*}

\begin{figure*}[t]
\centering
\begin{subfigure}[$r=0.05$]{\includegraphics[width=0.3\textwidth]{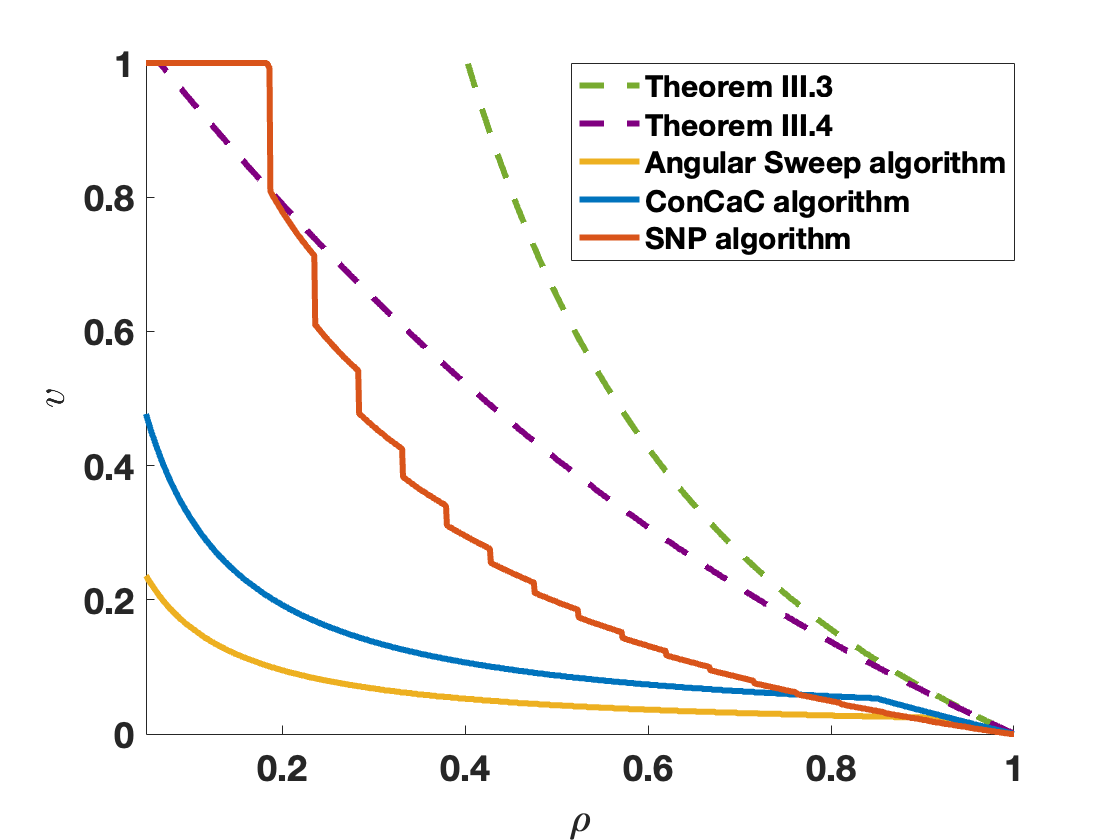}}
\end{subfigure}
\subfigure[$r=0.1$]{\includegraphics[width=0.3\textwidth]{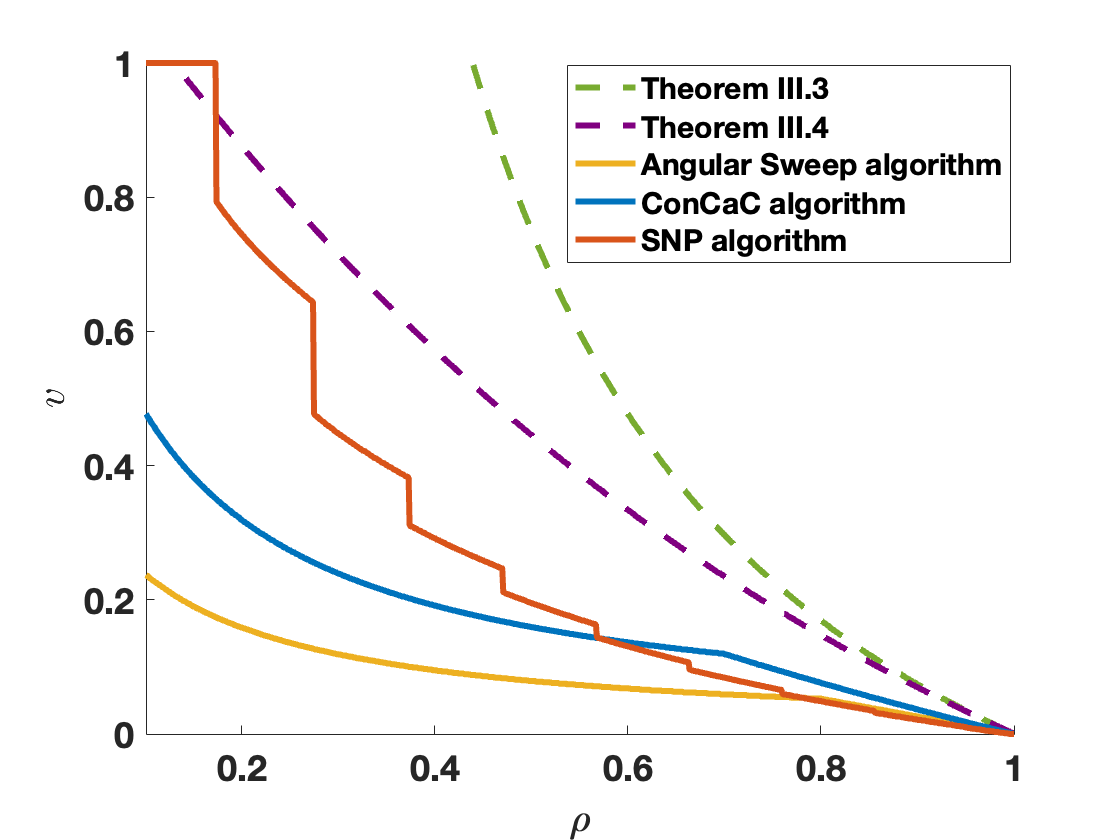}}
\subfigure[$r=0.3$]{\includegraphics[width=0.3\textwidth]{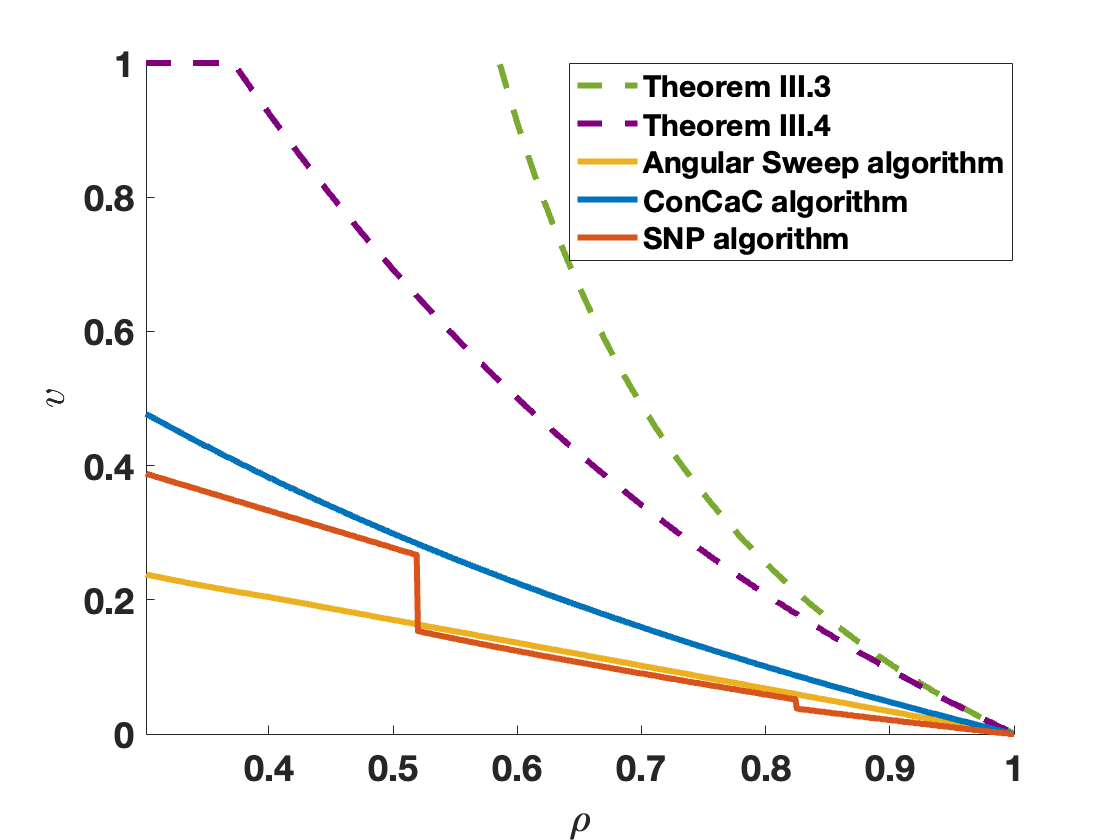}}
\caption{Parameter regime plot for $\theta=\pi/3$.}
\label{fig:param_regime_theta}
\end{figure*}

\begin{lemma}\label{lem:SNP_part2}
Each lost interval is accounted for by the captured intervals of SNP algorithm.
\end{lemma}
\begin{proof}
Note that any realization of SNP can be achieved by the combination of one or more traces as described in the following cases. 
Case (i) $k_i=3$ and $k_l=0 ~\forall l\in \{1,\dots,n_s\}\setminus\{i\}$, 
Case (ii) $0\leq k_i< 3$ and $k_o=2$ and Case (iii) $k_i=0$, $k_o=1$ and $k_{o'}=1, \forall o\in \{1,\dots,n_s\}\setminus\{i\}$ and $\forall o' \in \{ 1,\dots,n_s\}\setminus\{o\}$.
The idea is to identify all of the lost and captured intervals in each case and show that each lost interval is accounted by the captured intervals.

\textbf{Case (i)}: Due to comparison steps \textbf{C1} and \textbf{C2} at time $jD$, the captured intervals $S_i^{j+1}$, $S_i^{j+2}$ and $S_i^{j+3}$ account for all of the lost intervals $S_l^{j+2}$ and $S_l^{j+3}$, $\forall l\in \{ 1,\dots,n_s \}\setminus\{i\}$. There are two sub-cases; sub-case (a) $N_o=N_i$ at time instant $jD$ and sub-case (b), there exists a sector $N_o\neq N_i$ at time instant $jD$ (comparison $\textbf{C1}$) such that $\abs{S_o^{j+2}}<\abs{S_i^{j+1}}$ (comparison $\textbf{C2}$). 
We first consider sub-case (a). Sub-case (a) implies that at time instant $jD$, the total number of intruders in sector $N_i$ is more than in any other sector in the environment. Thus, captured intervals $S_i^{j+1}$, $S_i^{j+2}$ and $S_i^{j+3}$ account for all of the lost intervals $S_l^{j+2}$ and $S_l^{j+3},~\forall l\neq i$.
% because at time $(j+1)D$, comparison $\mathbf{C1}$ holds, i.e., $\abs{S_{l}^{j+2}}+\abs{S_{l}^{j+3}} < \abs{S_{i}^{j+1}}+\abs{S_{i}^{j+2}}+\abs{S_{i}^{j+3}}$. 
In sub-case (b), we account for lost intervals $S_l^{j+2}$, $S_l^{j+3}$, $\forall l\in\{1,\dots,n_s\}\setminus\{i,o\}$ and $S_o^{j+2}$, $S_o^{j+3}$, separately. Lost intervals $S_l^{j+2}$ and $S_l^{j+3}$ are accounted for because $\abs{S_l^{j+2}}+\abs{S_l^{j+3}}\leq \abs{S_{i}^{j+1}}+\abs{S_{i}^{j+2}}+\abs{S_{i}^{j+3}}$ or equivalently $\eta_i^l\leq\eta_i^i$  (comparison $\mathbf{C1})$. Now it remains to account for lost intervals $S_o^{j+2}$ and $S_o^{j+3}$. Observe that if there exists a sector $N_o\neq N_i$ at time instant $jD$ such that $\abs{S_o^{j+2}}<\abs{S_i^{j+1}}$, then there cannot exist the same $N_o$ at time instant $(j+1)D$ (from comparison $\textbf{C1}$). Thus, even if $N_o\neq N_i$ exists, then the lost interval $S_o^{j+2}$ is accounted by $S_i^{j+1}$ as $\abs{S_o^{j+2}}<\abs{S_i^{j+1}}$ (comparison $\mathbf{C2}$). Since, at time $(j+1)D$, sector $N_{o}$ cannot be selected again, it follows that $\eta_i^o<\eta_i^i$ at time $(j+1)D$ and thus, $S_o^{j+3}$ is accounted for.

\textbf{Case (ii)}: 
To account for the lost intervals $S_{l}^{j+k_{i}}$ and $S_{l}^{j+1+k_{i}}, \forall l\in \{1,\dots,n_s\}\setminus\{i\}$, from comparison \textbf{C1} and \textbf{C2} at time $(j+k_i)D$, the vehicle was supposed to capture all $S_i^{j-2+k_i}$, $S_{i}^{j-1+k_i}, \dots$, $S_{i}^{j+1+k_{i}}$ intervals. While the vehicle captured $S_i^{j-2+k_i},\dots,S_{i}^{j+k_{i}}$ intervals, it did not capture $S_{i}^{j+1+k_{i}}$. 
As $\eta_i^o>\eta_i^l$ at time instant $(j+k_i)D$, lost intervals $S_{l}^{j+k_{i}}$ and $S_{l}^{j+1+k_{i}}, \forall l\in \{1,\dots,n_s\}\setminus\{i\}$ are fully accounted for. The remaining lost intervals $S_{i}^{j+1+k_{i}}$, $S_{i}^{j+2+k_{i}}$, $S_{i}^{j+3+k_{i}}$ $S_{l}^{j+2+k_{i}}$, and $S_{l}^{j+3+k_{i}}$ $\forall l \in \{1,\dots,n_s\}\setminus\{o\}$ are fully accounted by the captured intervals $S_o^{j+2+k_{i}}$ and $S_o^{j+3+k_{i}}$ because the conditions $\eta_i^o>\eta_i^i$ and $\eta_i^o>\eta_i^l$ are satisfied at time instant $(j+k_i)D$ (comparison $\textbf{C1}$).

\textbf{Case (iii)}: To account for lost intervals $S_{i}^{j+1}$, $S_{i}^{j+2}$, $S_{i}^{j+3}$, $S_{l}^{j+2}$, and $S_{l}^{j+3}$ $\forall l\in \{1,\dots,n_s\}\setminus\{i,o\}$, the vehicle was supposed to capture $S_{o}^{j+2}$ and $S_{o}^{j+3}$. This follows because at time instant $jD$, $\eta_i^o > \eta_i^i$ (comparison $\textbf{C1}$) and $\abs{S_{o}^{j+2}}\geq \abs{S_{i}^{j+1}}$ (comparison $\textbf{C2}$). The vehicle captured $S_{o}^{j+2}$ which accounts for $S_{i}^{j+1}$ as $\abs{S_{o}^{j+2}}\geq \abs{S_{i}^{j+1}}$. As the vehicle moved to capture $S_{o'}^{j+4}$ at time $(j+2)D$, it implies that $\abs{S_{o'}^{j+4}}\geq \abs{S_{o}^{j+3}}$ (comparison $\textbf{C2}$) and thus, $S_{o}^{j+3}$, $S_{i}^{j+2}$, $S_{i}^{j+3}$, $S_{l}^{j+2}$, and $S_{l}^{j+3}$ are all accounted by the captured interval $\abs{S_{o'}^{j+4}}$. Finally, the lost intervals $\abs{S_{l}^{j+4}},\forall l \in \{1,\dots,n_s\}\setminus\{o'\}$ are accounted for as follows: If the vehicle also captures $S_{o'}^{j+5}$, then lost intervals $S_{l}^{j+4}$ are accounted for by per case (ii) ($k_i=1$). Otherwise (i.e., the vehicle moved to another sector $N_{\tilde{o}},\tilde{o}\neq o$ to capture $S_{\tilde{o}}^{j+6}$), $S_{l}^{j+4}$ is accounted for as per case (iii) as now the lost intervals will be $S_{i}^{j+3}$, $S_{i}^{j+4}$, $S_{i}^{j+5}$, $S_{l}^{j+4}$, and $S_{l}^{j+5}$ $\forall l\in \{1,\dots,n_s\}\setminus\{i,o\}$.

Finally, note that the boundary cases of the first and the last intervals fall into these cases by adding dummy intervals $S_i^0, \forall i \in \{ 1,\dots,n_s \}$ and $S_i^{Y+1}$, where $Y$ denotes the last interval that consists of intruders in any sector. We assume that the vehicle captures all of the dummy intervals. This concludes the proof.
\end{proof}

\begin{theorem}[SNP competitiveness]\label{thm:SNP}
For any problem instance $\mathcal{P}(\theta,\rho,v,r)$ that satisfies $3D\leq\frac{1-\rho}{v}$ and $\tfrac{2}{\rho \cos(\theta_s)}\leq 2D$, SNP is $\frac{3n_s-1}{2}$-competitive, where $
n_s = \lceil{\theta}/{\theta_s}\rceil$, $\theta_s = \arctan(r/\rho)$ and $D$ is defined in \eqref{def:Dist}.
\end{theorem}
\begin{proof}
From Lemma \ref{lem:SNP_part1} and Lemma \ref{lem:SNP_part2} it follows that, for any given trace of SNP algorithm, every two consecutively captured intervals pay for $3n_s-3$ lost intervals and every lost interval is accounted by two consecutive captured intervals, and the claim follows.
\end{proof}

% \begin{figure*}[t]
%      \centering
%      \begin{subfigure}[b]{0.3\textwidth}
%          \centering
%          \includegraphics[scale=0.35]{theta_pi_by_4.png}
%          \caption{$\theta=\pi/4$}
%          \label{fig:pi/4}
%      \end{subfigure}
%      \hfill
%      \begin{subfigure}[b]{0.3\textwidth}
%          \centering
%          \includegraphics[scale=0.35]{theta_pi_by_3.png}
%          \caption{$\theta=\pi/3$}
%          \label{fig:pi/3}
%      \end{subfigure}
%      \hfill
%      \begin{subfigure}[b]{0.3\textwidth}
%          \centering
%          \includegraphics[scale=0.35]{theta_pi_by_2.png}
%          \caption{$\theta=\pi/2$}
%          \label{fig:pi/2}
%      \end{subfigure}
%         \caption{Parameter regime plot for $r=0.2$.}
%         \label{fig:param_regime_r}
% \end{figure*}

% \begin{figure*}[t]
%      \centering
%      \begin{subfigure}[b]{0.3\textwidth}
%          \centering
%          \includegraphics[scale=0.35]{r_0.05.png}
%          \caption{$r=0.05$}
%          \label{fig:r1}
%      \end{subfigure}
%      \hfill
%      \begin{subfigure}[b]{0.3\textwidth}
%          \centering
%          \includegraphics[scale=0.35]{r_0.1.png}
%          \caption{$r=0.1$}
%          \label{fig:r3}
%      \end{subfigure}
%      \hfill
%      \begin{subfigure}[b]{0.3\textwidth}
%          \centering
%          \includegraphics[scale=0.35]{r_0.3.png}
%          \caption{$r=0.3$}
%          \label{fig:r4}
%      \end{subfigure}
%         \caption{Parameter regime plot for $\theta=\pi/3$.}
%         \label{fig:param_regime_theta}
% \end{figure*}

\section{Numerical Visualization and Observations}\label{sec:Results}
We now provide a numerical visualization of the analytic bounds derived in this paper. Figure \ref{fig:param_regime_r} shows the $(\rho,v)$ parameter regime plots for a fixed value of capture radius $r=0.2$ and varying values of $\theta$. Figure \ref{fig:param_regime_theta} shows the $(\rho,v)$ parameter regime plots for a fixed value of $\theta=\frac{\pi}{3}$ and varying values of capture radius $r$.

Since the competitiveness of SNP depends on the number of sectors, observe that the parameter regime of SNP is in \textit{regions}, where each region corresponds to a specific competitiveness. As the capture radius $r$ increases, the number of regions decreases and as $\theta$ increases, the number of regions increases. An important characteristic for SNP is that it can be used to determine the tradeoff between the competitiveness and the target parameter regime for the problem instance.

Figure \ref{fig:param_regime_r} suggests that Algorithm SNP has a relatively small area of utility (below the red and above the blue curve) in the parameter space and completely lies below the purple curve for Theorem \ref{thm:At_best_2}. For $\theta=\pi/4$ and $\theta=\pi/3$, SNP is at best $2.5$-competitive for $\rho\leq 0.5$ and $\rho\leq 0.35$, respectively, and decreases by a factor of $1.5$ with every region. For $\rho>0.5$, ConCaC is more effective than SNP as the curve defined by the conditions for ConCaC is completely above the conditions defined for SNP. For $\theta=\pi/2$, SNP is at best $4$-competitive and increases by a factor of $1.5$.

In Figure \ref{fig:param_regime_theta}, for small values of $r$ ($0.05$ and $0.1$), SNP has a relatively large area of utility.  For $r=0.05$ and $r=0.1$, SNP is at best $2.5$-competitive for $\rho<0.2$ and $\rho<0.17$. This suggests with the smaller the capture radius, SNP can capture equally fast intruders, and it covers a larger area in the parameter regime but at the cost of higher competitive ratio. Interestingly, the curve for SNP extends beyond that of Theorem \ref{thm:At_best_2}. For $r=0.3$, the curve defined by sufficient conditions for SNP is completely below the curve defined by conditions of ConCaC suggesting that SNP is ineffective for high values of $r$.

\section{Conclusion and Future Directions}\label{sec:Conclusion}
This work analyzed the problem wherein a single vehicle, having a finite capture radius $r$, is tasked to defend a perimeter in a conical environment from arbitrary many intruders that arrive in the environment in an arbitrary fashion. We designed and analyzed three algorithms, i.e., Angular Sweep, Conical Compare and Capture, and Stay Near Perimeter algorithms, and established sufficient conditions that guarantee a finite competitive ratio for each algorithm. In particular, we demonstrated that Angular Sweep algorithm is $1$-competitive and Conical Compare and Capture algorithm is $2$ competitive for parameter space beyond that of Angular Sweep algorithm. Moreover, the competitive ratio of Stay Near Perimeter changes with as a function of the parameters ($r,\rho,\theta$) and does not always extend beyond that of Conical Compare and Capture algorithm in specific parameter regimes. Thus, the choice of which algorithm to use depends on the problem parameters and the acceptable bound on competitiveness. We also derived two fundamental limits on achieving a finite competitive ratio by any online algorithm. 

Apart from closing the gap between the curve defined by Theorem \ref{thm:CaC} for algorithm ConCaC and the curve defined by Theorem \ref{thm:At_best_2} as well as the gap between the curve defined by Theorem \ref{thm:SNP} for algorithm SNP and Theorem \ref{thm:no_c}, key future directions include a cooperative multi-vehicle scenario with communication and energy constraints. 

\bibliographystyle{IEEEtran}
\bibliography{references}

\begin{thebibliography}{10}
\providecommand{\url}[1]{#1}
\csname url@rmstyle\endcsname
\providecommand{\newblock}{\relax}
\providecommand{\bibinfo}[2]{#2}
\providecommand\BIBentrySTDinterwordspacing{\spaceskip=0pt\relax}
\providecommand\BIBentryALTinterwordstretchfactor{4}
\providecommand\BIBentryALTinterwordspacing{\spaceskip=\fontdimen2\font plus
\BIBentryALTinterwordstretchfactor\fontdimen3\font minus
  \fontdimen4\font\relax}
\providecommand\BIBforeignlanguage[2]{{%
\expandafter\ifx\csname l@#1\endcsname\relax
\typeout{** WARNING: IEEEtran.bst: No hyphenation pattern has been}%
\typeout{** loaded for the language `#1'. Using the pattern for}%
\typeout{** the default language instead.}%
\else
\language=\csname l@#1\endcsname
\fi
#2}}

\bibitem{von2020multiple}
A.~Von~Moll, E.~Garcia, D.~Casbeer, M.~Suresh, and S.~C. Swar,
  ``Multiple-pursuer, single-evader border defense differential game,''
  \emph{Journal of Aerospace Info. Systems}, vol.~17, no.~8, pp. 407--416,
  2020.

\bibitem{macharet2020adaptive}
D.~G. Macharet, A.~K. Chen, D.~Shishika, G.~J. Pappas, and V.~Kumar, ``Adaptive
  partitioning for coordinated multi-agent perimeter defense,'' in \emph{2020
  IEEE/RSJ International Conference on Intelligent Robots and Systems
  (IROS)}.\hskip 1em plus 0.5em minus 0.4em\relax IEEE, 2020, pp. 7971--7977.

\bibitem{ShivamDVR2019}
S.~{Bajaj} and S.~D. {Bopardikar}, ``Dynamic boundary guarding against radially
  incoming targets,'' in \emph{2019 IEEE 58th Conference on Decision and
  Control (CDC)}, 2019, pp. 4804--4809.

\bibitem{von2021turret}
A.~Von~Moll, D.~Shishika, Z.~Fuchs, and M.~Dorothy, ``The
  turret-runner-penetrator differential game,'' in \emph{2021 American Control
  Conference (ACC)}.\hskip 1em plus 0.5em minus 0.4em\relax IEEE, 2021, pp.
  3202--3209.

\bibitem{sleator1985amortized}
D.~D. Sleator and R.~E. Tarjan, ``Amortized efficiency of list update and
  paging rules,'' \emph{Comm. of ACM}, vol.~28, no.~2, pp. 202--208, 1985.

\bibitem{psaraftis1988dynamic}
H.~N. Psaraftis, ``Dynamic vehicle routing problems,'' \emph{Vehicle routing:
  Methods and studies}, vol.~16, pp. 223--248, 1988.

\bibitem{bullo2011dynamic}
F.~Bullo, E.~Frazzoli, M.~Pavone, K.~Savla, and S.~L. Smith, ``Dynamic vehicle
  routing for robotic systems,'' \emph{Proceedings of the IEEE}, vol.~99,
  no.~9, pp. 1482--1504, 2011.

\bibitem{Smith2009translating}
S.~L. {Smith}, S.~D. {Bopardikar}, and F.~{Bullo}, ``A dynamic boundary
  guarding problem with translating targets,'' in \emph{Proceedings of the 48h
  IEEE Conference on Decision and Control (CDC) held jointly with 2009 28th
  Chinese Control Conference}, 2009, pp. 8543--8548.

\bibitem{isaacs1965differential}
R.~Isaacs, ``Differential games. a mathematical theory with applications to
  warfare and pursuit, control and optimization,'' 1965.

\bibitem{chen2016multiplayer}
M.~Chen, Z.~Zhou, and C.~J. Tomlin, ``Multiplayer reach-avoid games via
  pairwise outcomes,'' \emph{IEEE Transactions on Automatic Control}, vol.~62,
  no.~3, pp. 1451--1457, 2016.

\bibitem{garcia2019strategies}
E.~Garcia, A.~Von~Moll, D.~W. Casbeer, and M.~Pachter, ``Strategies for
  defending a coastline against multiple attackers,'' in \emph{2019 IEEE 58th
  CDC}.\hskip 1em plus 0.5em minus 0.4em\relax IEEE, 2019, pp. 7319--7324.

\bibitem{davydov2020pursuer}
A.~Davydov, P.~Rivera-Ortiz, and Y.~Diaz-Mercado, ``Pursuer coordination in
  multi-player reach-avoid games through control barrier functions,''
  \emph{IEEE Control Systems Letters}, vol.~5, no.~6, pp. 1910--1915, 2020.

\bibitem{margellos2011hamilton}
K.~Margellos and J.~Lygeros, ``Hamilton--{J}acobi formulation for reach--avoid
  differential games,'' \emph{IEEE Transactions on Automatic Control}, vol.~56,
  no.~8, pp. 1849--1861, 2011.

\bibitem{chen2014path}
M.~Chen, Z.~Zhou, and C.~J. Tomlin, ``A path defense approach to the
  multiplayer reach-avoid game,'' in \emph{53rd IEEE Conference on Decision and
  Control}.\hskip 1em plus 0.5em minus 0.4em\relax IEEE, 2014, pp. 2420--2426.

\bibitem{yan2019matchingbased}
R.~Yan, X.~Duan, Z.~Shi, Y.~Zhong, and F.~Bullo, ``Matching-based capture
  strategies for $\textup{3D}$ heterogeneous multiplayer reach-avoid
  differential games,'' 2019, online available
  at:https://arxiv.org/abs/1909.11881.

\bibitem{shishika2019perimeter}
D.~Shishika and V.~Kumar, ``Perimeter-defense game on arbitrary convex
  shapes,'' \emph{arXiv preprint, arXiv :1909.03989}, 2019.

\bibitem{velhal2021decentralized}
S.~Velhal, S.~Sundaram, and N.~Sundararajan, ``A decentralized multi-uav
  spatio-temporal multi-task allocation approach for perimeter defense,''
  \emph{arXiv preprint, arXiv:2102.07381}, 2021.

\bibitem{lee2021guarding}
Y.~Lee and E.~Bakolas, ``Guarding a convex target set from an attacker in
  {E}uclidean spaces,'' \emph{IEEE Control Systems Letters}, vol.~6, pp.
  1706--1711, 2021.

\bibitem{bajaj2021competitive}
S.~Bajaj, E.~Torng, and S.~D. Bopardikar, ``Competitive perimeter defense on a
  line,'' in \emph{2021 American Control Conference (ACC)}, 2021, pp.
  3196--3201.

\end{thebibliography}
\end{document}